  \providecommand\BibTeX{{%
    \normalfont B\kern-0.5em{\scshape i\kern-0.25em b}\kern-0.8em\TeX}}}
\definecolor{ocre}{RGB}{243,102,25} 
\newcommand{\reals}{{\mathbb{R}}} 
\newcommand{\nnreals}{{\mathbb{R}_{\geq 0}}}
\newcommand{\naturals}{{\mathbb{N}}} 
\newcommand{\dom}{\relax\ifmmode {\mathit{dom}} \else ${\sf dom}$\fi} 
\newcommand{\redreachset}{\mathit{ReachDict}}
\newcommand{\unsafeset}{{\mathit{X_a}}} 
\newcommand{\rootnode}{\mathit{root}} 
\newcommand{\traversalstack}{\mathit{stack}} 
\newcommand{\node}{\mathit{node}}
\renewcommand{\path}{\mathit{path}}
\newcommand{\reachset}{\mathit{Reachset}} 
\newcommand{\overreachset}{\mathit{ApprReachset}}
\newcommand{\state}{{x}}
\newcommand{\initset}{\Theta}
\newcommand{\redinitset}{\bar{\Theta}}
\newcommand{\stateset}{X}
\newcommand{\discretestateset}{\mathcal{X}}
\newcommand{\discretestateinst}{s}
\newcommand{\reddiscretestateset}{\bar{\discretestateset}}
\newcommand{\redstateset}{\bar{\stateset}}
\newcommand{\redgrid}{\bar{Q}}
\newcommand{\abstractunder}{\mathit{sym}}
\newcommand{\gridunder}{\mathit{grid}}
\newcommand{\redstateinst}{\bar{\stateinst}}
\newcommand{\reddiscretestateinst}{\bar{s}}
\newcommand{\reddisstateset}{\bar{\disstateset}}
\newcommand{\statedim}{n_x}
\newcommand{\redstatedim}{\bar{n}_x}
\newcommand{\inputdim}{n_u}
\newcommand{\redinputdim}{\bar{n}_u}
\newcommand{\distdim}{n_w}
\newcommand{\inputmap}{\beta}
\newcommand{\timestep}{\tau}
\newcommand{\constantinputsignals}{\mathcal{U}_c}
\newcommand{\reddistdim}{\bar{n}_w}
\newcommand{\undersym}{_{\mathit{rel}}}
\newcommand{\inputset}{U}
\newcommand{\redinputset}{\bar{\inputset}}
\newcommand{\redinputinst}{\bar{\inputinst}}
\newcommand{\disinputset}{\mathcal{U}}
\newcommand{\disinputinst}{a}
\newcommand{\reddisinputset}{\bar{\disinputset}}
\newcommand{\distset}{W}
\newcommand{\distinst}{w}
\newcommand{\reddistinst}{\bar{w}}
\newcommand{\reddistset}{\bar{\distset}}
\newcommand{\reddynamics}{\bar{f}}
\newcommand{\reddelta}{\bar{\delta}}
\newcommand{\symcrosssection}{\mathcal{C}}
\newcommand{\liegroup}{\mathcal{G}}
\newcommand{\stateinst}{{\bf x}}
\newcommand{\inputinst}{u}
\newcommand{\fsr}{\mathcal{R}}
\newcommand{\rv}{\mathcal{R}_{gs}}
\newcommand{\cache}{\mathit{cache}}
\newcommand{\parent}{\mathit{parent}}
\newcommand{\last}{\mathit{last}}
\newcommand{\pop}{\mathit{pop}}
\newcommand{\result}{\mathit{result}}
\newcommand{\agent}{\mathcal{A}}
\newcommand{\controller}{\mathcal{K}}
\newcommand{\redcontroller}{\bar{\mathcal{K}}}
\newcommand{\retrieve}{\mathit{retrieve}}
\newcommand{\insertK}{\mathit{insert}}
\newcommand{\redConcState}{\mathit{findFrameContainX}}
\newcommand{\redConcStateInter}{\mathit{findFrameInterX}}
\newcommand{\redConcInput}{\mathit{findFrameU}}
\newcommand{\applyInvFrame}{\mathit{applyInvFrame}}
\newcommand{\fullreach}{\mathit{full}}
\newcommand{\lastreach}{\mathit{last}}
\newcommand\nocaption{%
    \renewcommand\p@subfigure{}
    \renewcommand\thesubfigure{\thefigure\alph{subfigure}}
}
\title{\LARGE \bf
	Symmetry-based Abstraction Algorithm for Accelerating Symbolic Control Synthesis 
}
\author{Hussein Sibai\inst{1} \and Sacha Huriot\inst{1} \and Tyler Martin\inst{1} \and Murat Arcak\inst{2} 
\\  
           \institute{}
		       \inst{1}  Washington University in St. Louis \\
		         {\tt\small \{sibai,h.sacha,martin.t\}@wustl.edu} \\
  \inst{2} University of California, Berkeley \\ 
{\tt\small arcak@berkeley.edu }}
\begin{document}

	\maketitle
	\thispagestyle{empty}
	\pagestyle{empty}

	\begin{abstract}
 We propose an efficient symbolic control synthesis algorithm for equivariant continuous-time dynamical systems to satisfy reach-avoid specifications. The algorithm exploits dynamical symmetries to construct lean abstractions to avoid redundant computations during synthesis.
	Our proposed algorithm adds another layer of abstraction over the common grid-based discrete abstraction  before solving the synthesis problem.
	It combines each set of grid cells 
	that are at a similar relative position from the targets and nearby obstacles, defined by the symmetries, into a single abstract state. It uses this layer of abstraction to guide the order by which actions are explored during synthesis over the grid-based abstraction. 
 We demonstrate the potential of our algorithm by synthesizing a reach-avoid controller for a 3-dimensional ship model with translation and rotation symmetries in the special Euclidean group SE(2)\footnote{Code is available at: \url{https://github.com/HusseinSibai/symmetric_control_synthesis/tree/public
}.}.
	\end{abstract}

	\section{INTRODUCTION}
	
 Control synthesis is the problem  of automatically generating a 
	control signal that drives a dynamical system to satisfy a given specification.  If solved, it offers a correct-by-construction approach to formally assuring that control systems behave as intended. Several promising approaches for synthesis have been proposed in the past for discrete and continuous-time dynamical systems and for reach-avoid and optimality specifications~\cite{Meyer2020_control_synthesis_ship_docking,Reissig_feeback_refinement_relations_synthesis_tac_2016,Girard_heirarchical_control_automatica_2009,tabuada_2009,Pola_decenteralized_control_network_2018,symmetry_reduction_dynamic_programming}.  The main challenge for 
 wider deployability 
	is scalability, as these
 algorithms require expensive computational time and memory. They are known to suffer from what is colloquially known as the curse-of-dimensionality since their computational complexity grows at least exponentially with the state and input dimensions~\cite{tabuada_2009}. 
	
	On the other hand, many control systems possess symmetries. A symmetry map of a dynamical system acts on its state space. When applied on the states visited by any of the system trajectories, a symmetry map results in another trajectory of the same system  
	starting from a different initial state, potentially following different control and disturbance signals. A control synthesis algorithm explores the feasible control signals at each state and chooses a one that drives the system to satisfy the given specification. With symmetric dynamics, such an algorithm can infer the control that satisfies the specification at a given state without its usual comprehensive exploration if it knows the one that satisfies the specification at a symmetric state. For example, a vehicle with translation- and rotation-symmetric dynamics  will likely need the same control to satisfy a reach-avoid specification when starting from states at a similar {\em relative} position from the reach and avoid sets. 
    We exploit this simple, yet effective, intuition by constructing abstractions that combine such similar {\em concrete} states into the same {\em abstract} states. 
 We use these abstractions to explore the control space more efficiently during control synthesis. We further use symmetries to compute fewer reachable sets from scratch using existing tools during the construction of the abstractions, which in turn  can be efficiently transformed to obtain the rest of the reachable sets.
	
	Abstractions 
 have helped solving several formal verification and synthesis problems in various domains, e.g., \cite{Doyen_AR_2005,Assume-Guarantee-AR-Sergiy-2014,HARE_2017,TIOAmon,Meyer2020_control_synthesis_ship_docking,sibai2021scenechecker}. 
	Common abstractions for continuous-time dynamical systems are discrete transition models, e.g.,~\cite{tabuada_2009,Meyer2020_control_synthesis_ship_docking}. 
	Such discrete models are usually built by gridding the state, control, and time spaces. We call them {\em grid-based abstractions} (GA). Then, their non-deterministic discrete transitions are constructed by computing the bounded-time reachable sets, or corresponding over-approximations, of the continuous systems starting from the cells of the grid following constant control for a constant time interval, using reachability analysis tools, e.g., \cite{TIRA_Meyer_2019,c2e2}. Such an abstraction results in discrete state and control sets with sizes exponential in their dimensions. To synthesize a reach-avoid controller for such an abstraction, the standard algorithm repetitively iterates, with an {\em arbitrary} order, over these 
 sets. At each iteration, the algorithm extends the target or {\em reach} set with the discrete states that can reach it in one transition without intersecting with the {\em avoid} set~\cite{tabuada_2009,Meyer2020_control_synthesis_ship_docking}. 
 Each iteration of the algorithm has an exponential time complexity in the state and input dimensions.

	
	
 We further abstract the continuous-time system by combining the cells in the grid that are at {\em similar} relative positions to the reach and avoid sets. We call the new system the {\em symmetry-based abstraction} (SA). 
 We use the SA to define an order over which different controls are explored during synthesis at every  GA state. Such an order allows more efficient exploration of the action space during synthesis. The order will reflect the likelihood of control symbols satisfying the specification at a GA state given the experience at other GA states 
 that are represented by the same SA state. 

 Moreover, we use symmetries to compute 
 a set of reachable sets with cardinality equal to the number of cells in the grid over the control space times the number of cells in the non-symmetric dimensions of the state space. That is exponentially fewer than the ones used to compute the GA transitions in the traditional method (e.g.,~\cite{Meyer2020_control_synthesis_ship_docking}). Also, each of the reachable sets we compute starts from an initial set of states that has uncertainty only in the non-symmetric dimensions of the state. These reachable sets are represented in the relative coordinates by construction and are sufficient to construct the SA. When a GA transition is needed during synthesis, it is sufficient to transform one of these pre-computed reachable sets using particular symmetry transformations (Corollary~\ref{cor:lower_dimensional_reachable_sets}). Such transformations are potentially more efficient than computing the reachable set from scratch using existing reachability analysis tools~\cite{sibai-atva-2019,sibai-tacas-2020,sibai2021scenechecker}.

 The algorithm for constructing SA transforms each cell in the grid to a manifold where the symmetric coordinates are constant. Then, it finds the control symbol that results in the closest reachable set to the target that does not intersect the obstacles, when the cell is the initial set of states. All sets, i.e., the initial, reachable, target, and obstacles' ones, are represented in the relative coordinates based on the original concrete coordinates of the cell. The SA construction algorithm then maps the cells which share the same such control to the same SA state. 
 
 Before synthesis, we use the constructed SA to   
 build a cache 
 mapping every SA state to a set of pairs of non-negative integers, called {\em scores}, and control symbols. 
 At first, every entry is initialized to a singleton set  with the pair having the control symbol that was used to construct the corresponding SA state and its score being zero. During control synthesis for GA, this cache is used to 
define the order according to which control symbols are checked at every GA state.
Particularly, if at a certain synthesis iteration a control symbol is found to be specification-satisfying for some GA state, the cache entry corresponding to its representative SA state is updated. Specifically, if that symbol exists in that cache entry, its score is incremented by one. Otherwise, a new pair is added to the entry with a score of one.
The synthesis algorithm explores the symbols at a given GA state in a decreasing score order checking the controls that satisfied the specification for more GA states with the same representative SA state first.
Our algorithm preserves the guarantees of the traditional synthesis algorithm for GA. We present experimental results showing promising computational time savings when synthesizing a reach-avoid controller for a 3-dimensional ship model while exploiting its translation and rotation symmetries.

\paragraph{Related work}
\label{sec:appendix_relatedwork}
Symmetries have been useful in numerous domains: from deriving  conservation laws~\cite{Noether1918}, analyzing contraction and stability~\cite{russo2011symmetries}, reinforcement and supervised learning algorithms~\cite{equivariant_reinforcement_learning_vanderpol_neurips_2020,group_equivariant_nn}, and interval integration of ODEs~\cite{guaranteed_integration}.

Symmetries have been used to efficiently design motion planners for vehicles to achieve reach-avoid specifications~\cite{factest_cav_2020,Majumdar2017}. Particularly, they have been used to transform pre-computed reachable sets to obtain reachable sets starting from different initial sets.
The plans these works synthesize start from a fixed initial set of states and assume given user-provided feedback controllers. Symmetries have also been used to reduce the dimensionality of dynamic programming problems when both the system dynamics and the cost function share the same symmetries~\cite{symmetry_reduction_dynamic_programming}.
	In contrast, we do not assume a given feedback controller, and do not assume that the specification has the same symmetries as the dynamics. We aim to find the largest initial set in the state set from which the reach-avoid specification can be reached along with the associated controller. 

Symmetries have also been useful to accelerate  safety verification of continuous-time dynamical systems and hybrid automata~\cite{MaidensSym,sibai-atva-2019,sibai-tacas-2020,sibai2021scenechecker}. More specifically, they have been used to reduce the dimensionality of reachable set computation problems, to efficiently cache reachable sets during verification, and to generate abstractions of hybrid automata. 
 These works tackle the verification problem for closed dynamical systems, while we tackle the control synthesis problem for systems with inputs and disturbances.
	\section{Preliminaries}

	\paragraph{Notation} We denote by $\naturals$, $\reals$ and $\nnreals$ the sets  of natural, real, and non-negative real numbers, respectively. 
	$\forall n \in \naturals$, we define $[n] = \{0,\dots, n-1\}$.  
	For any $g: S \times \nnreals \rightarrow S$, 
	we denote by $g(s, \cdot): \nnreals \rightarrow S$ the function of $\nnreals$ resulting from  projecting $g$ to having the first parameter fixed to $s \in S$. If $S$ is a finite set, we denote its cardinality by $|S|$. 
	\vspace{-0.1in}
	\subsection{System and problem definitions}
	We consider control systems described with ordinary differential equations (ODEs): \begin{align}\label{eq:sys}
		\dot{x}(t) = f(\state(t), \inputinst(t), \distinst(t)),
	\end{align}
	where $\state(t) \in \reals^{n_\state}$ is the {\em state} of the system at time $t\in \nnreals$,  $\inputinst: \nnreals \rightarrow \inputset \subset \reals^{n_\inputinst}$  and $\distinst: \nnreals \rightarrow \distset \subseteq \reals^{n_\distinst}$ are two measurable functions representing control and disturbance inputs with $\inputset$ and $\distset$ being compact sets,
	and $f: \reals^{n_\state} \times \reals^{n_\inputinst} \times \reals^{n_\distinst} \rightarrow \reals^{n_\state}$ is a Lipschitz continuous function in $\state$ uniformly in $\inputinst$ and $\distinst$. 
	Given any time bound $T \geq 0$, initial state $\stateinst_0 \in \reals^{n_\state}$, input signal $\inputinst: [0,T] \rightarrow \reals^{n_\inputinst}$, and a disturbance signal $\distinst: [0,T] \rightarrow \reals^{n_\distinst}$, we assume that the corresponding {\em trajectory} of system~(\ref{eq:sys}) exists and is unique. 
	We denote such a  trajectory by: $\xi(\stateinst_0, \inputinst, \distinst; \cdot): [0, T] \rightarrow \reals^{n_\state}$, and say that $T$ is the duration of $\xi$. The trajectory $\xi$ satisfies  equation~(\ref{eq:sys}) almost everywhere and is equal to $\stateinst_0$ at $t = 0$. 
	
	Given a time interval $[t_0, t_1] \subset \nnreals$, a compact initial set $\stateset_0 \subseteq \stateset$, a fixed control function $\inputinst: \nnreals \rightarrow \reals^{n_\inputinst}$, and the interval disturbance set $\distset$, the {\em reachable set} of system~(\ref{eq:sys})  is defined as follows:
		$ \reachset(\stateset_0, \inputinst, \distset; [t_0,t_1]) := \{\xi(\stateinst_0, \inputinst, \distinst; t) \ |\ 
  \stateinst_0 \in \stateset_0, t\in [t_0,t_1], \distinst: \nnreals \rightarrow \distset \}$,
	where every $\distinst$ in the equation above is a measurable signal. We abuse notation and denote   the set of states that are reachable exactly at time $t$, i.e., $\reachset(\stateset_0, \inputinst, \distset; [t, t])$, by $\reachset(\stateset_0, \inputinst, \distset; t)$. 
	It is generally undecidable to compute the exact reachable sets of nonlinear control systems~\cite{henz:thhybat}. Fortunately, significant progress resulted in numerous tools to over-approximate them~(e.g., \cite{TIRA_Meyer_2019,c2e2}).



	Given {\em reach} and {\em avoid} sets $\stateset_r, \stateset_a \subseteq \reals^{n_\state}$, a {\em reach-avoid} 
	specification of system~(\ref{eq:sys}) requires its trajectories to eventually reach $X_r$, while avoiding being in $X_a$. The ``reach set'' $\stateset_r$ should not be confused with the ``reachset'', or equivalently, reachable set, that we defined earlier.
 Now, we formally define the problem as follows:
	
	\begin{problem}
		\label{def:problem}
		Given system~(\ref{eq:sys}) and  
		{\em reach} and {\em avoid} sets of states $\stateset_r, \stateset_a \subseteq \stateset$, 
		find a set of initial states $\stateset_0 \subseteq \stateset$, preferably the largest one possible, and a controller $g: \stateset \times \nnreals \rightarrow \inputset$ such that for any measurable disturbance function $\distinst: \nnreals  \rightarrow \distset$, $\stateinst_0 \in \stateset_0$, $\exists t_r \in \nnreals$,
		such that $\forall t \leq t_r$, $\xi(\stateinst_0, g(\stateinst_0,\cdot),\distinst; t) \in \stateset$\textbackslash$\stateset_a$ 
		and
		$\xi(\stateinst_0, g(\stateinst_0,\cdot),\distinst; t_r) \in \stateset_r$. 
	\end{problem}
	
	
	\subsection{Transformation groups and equivariant control systems  }

	A control system whose dynamics are unchanged under transformations of its state, control, and disturbance 
	is called {\em equivariant}. Formally,  fix a {\em transformation group} $\{h_\alpha = (\phi_\alpha, \chi_\alpha, \psi_\alpha)\}_{\alpha \in \liegroup}$ on $\reals^{n_\state} \times \reals^{n_\inputinst} \times \reals^{n_\distinst}$,
 where $\liegroup$ is a Lie group with dimension $r$. 
	Then,	system~(\ref{eq:sys}) is called $\liegroup$-{\em equivariant} if $\forall \alpha \in \liegroup$, $\stateinst \in \reals^{n_\state}$, $\inputinst \in  \reals^{n_\inputinst}$, and $\distinst \in  \reals^{n_\distinst}$, it satisfies
		$\frac{\partial \phi_{\alpha} }{\partial \state }|_{\state = \stateinst} f(\stateinst, \inputinst, \distinst) = f(\phi_{\alpha}(\stateinst), \chi_\alpha(\inputinst), \psi_\alpha(\distinst))$~\cite{symmetry_reduction_dynamic_programming}.
	A trajectory or a reachset of a $\liegroup$-equivariant system can be transformed using the corresponding transformation group to obtain more trajectories or reachsets, as shown in the following theorems.
	
	\begin{theorem}[\cite{russo2011symmetries}]
		\label{thm:symmetric_trajectories}
		If system~(\ref{eq:sys}) is $\liegroup$-equivariant, then $\forall \alpha \in \liegroup$, $\stateinst_0 \in \reals^{n_\state}$, $\inputinst: \nnreals \rightarrow \reals^{n_\inputinst}$, $\distinst: \nnreals \rightarrow \reals^{n_\distinst}$, and  $t \in \nnreals$, 
		$\phi_\alpha(\xi(\stateinst_0,\inputinst, \distinst;t)) = \xi(\phi_\alpha(\stateinst_0),\chi_\alpha \circ \inputinst, \psi_\alpha \circ \distinst;t)$.
	\end{theorem}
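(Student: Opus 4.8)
The plan is to exploit the existence-and-uniqueness assumption on trajectories from the preliminaries together with the $\liegroup$-equivariance identity, via the standard argument that the image of a trajectory under $\phi_\alpha$ solves the transformed initial value problem. Fix $\alpha \in \liegroup$, $\stateinst_0 \in \reals^{n_\state}$, and measurable inputs $\inputinst, \distinst$, and define the candidate curve $\eta(t) := \phi_\alpha(\xi(\stateinst_0, \inputinst, \distinst; t))$. I would show that $\eta$ is precisely the trajectory of system~(\ref{eq:sys}) starting at $\phi_\alpha(\stateinst_0)$ under the transformed inputs $\chi_\alpha \circ \inputinst$ and $\psi_\alpha \circ \distinst$; the stated identity then follows immediately because that trajectory is unique.

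First I would verify the initial condition: since $\xi(\stateinst_0, \inputinst, \distinst; 0) = \stateinst_0$, we get $\eta(0) = \phi_\alpha(\stateinst_0)$, matching the desired initial state. Next I would verify that $\eta$ satisfies the correct ODE almost everywhere. Differentiating by the chain rule at any $t$ where $\xi$ is differentiable gives $\dot{\eta}(t) = \frac{\partial \phi_\alpha}{\partial \state}|_{\state = \xi(\stateinst_0,\inputinst,\distinst;t)}\, \dot{\xi}(\stateinst_0,\inputinst,\distinst;t)$. Substituting $\dot{\xi}(\stateinst_0,\inputinst,\distinst;t) = f(\xi(\stateinst_0,\inputinst,\distinst;t), \inputinst(t), \distinst(t))$ and then applying the $\liegroup$-equivariance identity with $\stateinst = \xi(\stateinst_0,\inputinst,\distinst;t)$, $\inputinst = \inputinst(t)$, and $\distinst = \distinst(t)$ collapses the Jacobian factor and yields $\dot{\eta}(t) = f(\phi_\alpha(\xi(\stateinst_0,\inputinst,\distinst;t)), \chi_\alpha(\inputinst(t)), \psi_\alpha(\distinst(t))) = f(\eta(t), (\chi_\alpha\circ \inputinst)(t), (\psi_\alpha\circ\distinst)(t))$. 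Hence $\eta$ solves system~(\ref{eq:sys}) almost everywhere with inputs $\chi_\alpha\circ\inputinst$ and $\psi_\alpha\circ\distinst$ and initial state $\phi_\alpha(\stateinst_0)$.

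Finally, I would invoke the standing existence-and-uniqueness assumption: the trajectory $\xi(\phi_\alpha(\stateinst_0), \chi_\alpha\circ\inputinst, \psi_\alpha\circ\distinst; \cdot)$ is the unique solution of this transformed initial value problem, so $\eta(t) = \xi(\phi_\alpha(\stateinst_0), \chi_\alpha\circ\inputinst, \psi_\alpha\circ\distinst; t)$ for all $t$, which is exactly the claim. The main obstacle I anticipate is measure-theoretic bookkeeping rather than anything conceptual: since $\inputinst$ and $\distinst$ are only measurable, $\xi$ is merely absolutely continuous and satisfies the ODE a.e., so the chain-rule step must be justified on the full-measure set where $\xi$ is differentiable, and one must confirm that $\chi_\alpha\circ\inputinst$ and $\psi_\alpha\circ\distinst$ remain admissible (measurable, $\inputset$- and $\distset$-valued) signals, which holds provided $\chi_\alpha, \psi_\alpha$ are continuous. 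A secondary point worth stating explicitly is that $\phi_\alpha$ is assumed differentiable, which is already implicit in the appearance of $\frac{\partial \phi_\alpha}{\partial \state}$ in the equivariance definition.
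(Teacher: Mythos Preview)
Your argument is correct and is exactly the standard proof of this result. Note, however, that the paper does not supply its own proof of this theorem: it is stated with a citation to~\cite{russo2011symmetries} and used as a black box, so there is nothing in the paper to compare against beyond confirming that your derivation matches the intended statement.
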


	
	\begin{theorem}[\cite{sibai-atva-2019,sibai-tacas-2020}]
		\label{thm:symmetric_reachsets}
		If system~(\ref{eq:sys}) is $\liegroup$-equivariant, then $\forall \alpha \in \liegroup$, $\stateset_0 \subset \reals^{n_\state}$, $\inputinst: \nnreals \rightarrow \reals^{n_\inputinst}$, $\distinst: \nnreals \rightarrow \reals^{n_\distinst}$, and  $[t_0,t_1] \subset \nnreals$, 
		$ \phi_\alpha(\reachset(\stateset_0,\inputinst, \distset;[t_0,t_1])) =$
   $\reachset($ $\phi_\alpha(\stateset_0),\chi_\alpha \circ \inputinst, \psi_\alpha ( \distset);[t_0,t_1])$.
		Moreover, if we let $\overreachset(\stateset_0,\inputinst, \distinst;[t_0,t_1])$ be a set in $ \reals^{n_\state}$ over-approximating the exact reachable set $\reachset(\stateset_0,\inputinst, \distinst;[t_0,t_1])$, 
  Then, $\reachset(\phi_\alpha(\stateset_0),\chi_\alpha \circ \inputinst, \psi_\alpha(\distset);[t_0,t_1]) \subseteq 
   \phi_\alpha(\overreachset(\stateset_0,\inputinst, \distset;[t_0,t_1]))$.
	\end{theorem}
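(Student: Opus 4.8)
The plan is to lift the trajectory-level identity of \thmref{symmetric_trajectories} to the level of reachable sets, and then to obtain the over-approximation claim as an immediate corollary of the exact-set equality together with monotonicity of set images under $\phi_\alpha$. For the exact equality I would argue by double inclusion, unfolding the definition of $\reachset(\stateset_0, \inputinst, \distset; [t_0,t_1])$ as the union of trajectory endpoints $\xi(\stateinst_0, \inputinst, \distinst; t)$ over all $\stateinst_0 \in \stateset_0$, all $t \in [t_0,t_1]$, and all measurable $\distinst$ valued in $\distset$. Note the fixed input $\inputinst$ is a parameter rather than something unioned over, so the trajectory identity automatically sends it to $\chi_\alpha \circ \inputinst$ as required on the right-hand side.

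For the ``$\subseteq$'' inclusion, I would take an arbitrary point $\phi_\alpha(p)$ of the left-hand set, where $p = \xi(\stateinst_0, \inputinst, \distinst; t)$ for some admissible $\stateinst_0, t, \distinst$. Applying \thmref{symmetric_trajectories} rewrites it as $\phi_\alpha(p) = \xi(\phi_\alpha(\stateinst_0), \chi_\alpha \circ \inputinst, \psi_\alpha \circ \distinst; t)$. Since $\phi_\alpha(\stateinst_0) \in \phi_\alpha(\stateset_0)$, the time $t$ still lies in $[t_0,t_1]$, and $\psi_\alpha \circ \distinst$ is a measurable signal valued in $\psi_\alpha(\distset)$, this endpoint belongs to $\reachset(\phi_\alpha(\stateset_0), \chi_\alpha \circ \inputinst, \psi_\alpha(\distset); [t_0,t_1])$, giving the inclusion.

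The ``$\supseteq$'' direction is where the real care is needed, and it is the step I expect to be the main obstacle. Here I would start from a point $\xi(\stateinst_0', \chi_\alpha \circ \inputinst, \distinst'; t)$ of the right-hand set, with $\stateinst_0' \in \phi_\alpha(\stateset_0)$ and $\distinst'$ valued in $\psi_\alpha(\distset)$, and I must exhibit a preimage under $\phi_\alpha$ in the original reachset. This is where the transformation-group structure is essential: since $\phi_\alpha$ is a bijection, $\stateinst_0' = \phi_\alpha(\stateinst_0)$ for a unique $\stateinst_0 \in \stateset_0$; and since $\psi_\alpha$ is a bijection on the disturbance space with measurable inverse, I can set $\distinst := \psi_\alpha^{-1} \circ \distinst'$, a \emph{measurable} signal valued in $\distset$ satisfying $\psi_\alpha \circ \distinst = \distinst'$. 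Invoking \thmref{symmetric_trajectories} once more then yields $\xi(\stateinst_0', \chi_\alpha \circ \inputinst, \distinst'; t) = \phi_\alpha(\xi(\stateinst_0, \inputinst, \distinst; t))$, exhibiting the endpoint as the $\phi_\alpha$-image of a point of $\reachset(\stateset_0, \inputinst, \distset; [t_0,t_1])$. The crux is purely about signals: checking that every admissible disturbance on the transformed side arises as $\psi_\alpha$ composed with an admissible disturbance on the original side, which is exactly where invertibility and preservation of measurability under the group action are used.

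Finally, the over-approximation claim follows with no new dynamics. By the exact equality just proved, $\reachset(\phi_\alpha(\stateset_0), \chi_\alpha \circ \inputinst, \psi_\alpha(\distset); [t_0,t_1]) = \phi_\alpha(\reachset(\stateset_0, \inputinst, \distset; [t_0,t_1]))$, and since $\reachset(\stateset_0, \inputinst, \distset; [t_0,t_1]) \subseteq \overreachset(\stateset_0, \inputinst, \distset; [t_0,t_1])$ by hypothesis, applying the image map $\phi_\alpha(\cdot)$ preserves this inclusion, yielding the desired containment in $\phi_\alpha(\overreachset(\stateset_0, \inputinst, \distset; [t_0,t_1]))$.
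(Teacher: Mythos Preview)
The paper does not actually prove this theorem: it is stated with citations to \cite{sibai-atva-2019,sibai-tacas-2020} and used as a black box (only Corollary~\ref{cor:lower_dimensional_reachable_sets} is proved in the appendix, by invoking this theorem). So there is no in-paper proof to compare against.

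That said, your argument is the natural and correct one, and it is essentially how the cited references establish the result: lift the pointwise trajectory identity of Theorem~\ref{thm:symmetric_trajectories} to sets by unioning over initial states, times, and admissible disturbance signals, and handle the two inclusions by pushing forward along $h_\alpha$ and pulling back along $h_\alpha^{-1}$ respectively. You correctly isolate the only nontrivial step, namely that in the $\supseteq$ direction one needs every measurable $\distinst'$ valued in $\psi_\alpha(\distset)$ to arise as $\psi_\alpha \circ \distinst$ for some measurable $\distinst$ valued in $\distset$; this is exactly where the group structure (invertibility of $\psi_\alpha$, with $\psi_\alpha^{-1} = \psi_{\alpha^{-1}}$ smooth hence Borel) is used. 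The over-approximation clause then follows, as you say, from the exact equality plus monotonicity of images, with no further dynamical input.
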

	
	In the rest of the paper, we use $\reachset$ for both the exact values and the over-approximations of reachable sets. We clarify when a distinction has to be made. 

	\subsection{Cartan's moving frame and computing lower-dimensional reachsets}
	\label{sec:cartan}
	
	
	
	Assume that system~(\ref{eq:sys}) is $\liegroup$-equivariant and that $\liegroup$ is $r$-dimensional. Moreover, assume that $\forall \alpha \in \liegroup$, $\phi_\alpha$ can be split as $(\phi_\alpha^a, \phi_\alpha^b)$, defining the $r$ and $n_\state- r$ dimensions of the image of $\phi_\alpha$, respectively, such that $\phi_\alpha^a$ is invertible. We call the first $r$ dimensions the {\em symmetric coordinates} and the rest the {\em non-symmetric} ones. We select a $c$ in the image of $\phi_{\alpha}^a$ and define the cross-section $\symcrosssection = \{\stateinst\ |\ \phi_{e}^a(\stateinst) = c\}$, a submanifold of $\reals^{\statedim}$. We assume that for each $\stateinst \in \reals^{\statedim}$, there exists a unique $\alpha \in \liegroup$ such that $\phi_{\alpha}(\stateinst) \in \symcrosssection$. We denote by $\gamma: \stateset \rightarrow \liegroup$ the function  that maps each $\stateinst \in \stateset$ to  its corresponding $\alpha \in \liegroup$. We call $\gamma$ a {\em moving frame}. This concept has been proposed by Cartan in 1937~\cite{Cartan1937LaTD} and has found numerous applications (e.g.,~\cite{symmetry_preserving_observers,jakubczyk1998symmetries,MaidensSym,sibai2021scenechecker,sibai-tacas-2020,sibai-tac-2020}).
	
	In the following corollary of Theorem~\ref{thm:symmetric_reachsets}, we show that when a moving frame is available, we can decompose the computation of a reachable set of system~(\ref{eq:sys}) starting from an initial set $\stateset_0$ into the computation of another reachable set starting from the projection of $\stateset_0$ to the cross-section $\symcrosssection$ followed by a set of $r$-dimensional transformations.
	A version of the corollary for translation symmetry has been proposed in~\cite{sibai-atva-2019}. Also, a similar idea has been utilized for interval integration using general Lie symmetries in~\cite{guaranteed_integration}. We present the proof of the corollary in Appendix~\ref{sec:appendix_proofs}.
 \begin{restatable}[]{corollary}{symmetryreachablesets}
	\label{cor:lower_dimensional_reachable_sets}
	If system~(\ref{eq:sys}) is $\liegroup$-equivariant where $\liegroup$ is $r$-dimensional, $\gamma$ is a corresponding moving frame, and $\forall \alpha \in \liegroup$,   $\phi_{\alpha}^b$ is the map that projects $\state$ to its last $\statedim - r$ coordinates and $\chi_{\alpha}$ is the identity map, then $\forall\stateset_0 \subset \reals^{n_\state}$, $\inputinst: \nnreals \rightarrow \reals^{n_\inputinst}$, 
 and  $[t_0,t_1] \subset \nnreals$: 
		 $\reachset(\stateset_0, \inputinst, \distset;[t_0,t_1]) 
   \subseteq \cup_{\stateinst_0 \in \stateset_0}{\phi_{\gamma(\stateinst_0)}}^{-1}(\reachset(\redstateset_0,\inputinst, \reddistset;[t_0,t_1]))$,
		where $\redstateset_0 = \cup_{\stateinst_0 \in \stateset_0}\phi_{\gamma(\stateinst_0)}(\stateinst_0)$ and $\reddistset = \cup_{\stateinst_0 \in \stateset_0}\psi_{\gamma(\stateinst_0)}(\distset)$.
 \end{restatable}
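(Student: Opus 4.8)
The plan is to reduce the statement to the single–initial–state case and then invoke the equivariance of reachable sets (\thmref{symmetric_reachsets}) pointwise, using a \emph{different} group element $\gamma(\stateinst_0)$ for each initial point. First I would use that the reachable set from a set is the union of the reachable sets from its points, $\reachset(\stateset_0, \inputinst, \distset; [t_0,t_1]) = \cup_{\stateinst_0 \in \stateset_0} \reachset(\{\stateinst_0\}, \inputinst, \distset; [t_0,t_1])$, which is immediate from the definition of the reachable set as a union over all admissible initial states and disturbance signals. This lets me treat each $\stateinst_0$ separately.

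For a fixed $\stateinst_0$, I would apply \thmref{symmetric_reachsets} with $\alpha = \gamma(\stateinst_0)$ to the singleton $\{\stateinst_0\}$. Since the hypothesis gives $\chi_\alpha = \mathrm{id}$, the transformed control $\chi_{\gamma(\stateinst_0)} \circ \inputinst$ is just $\inputinst$, so the theorem yields $\phi_{\gamma(\stateinst_0)}(\reachset(\{\stateinst_0\}, \inputinst, \distset; [t_0,t_1])) = \reachset(\{\phi_{\gamma(\stateinst_0)}(\stateinst_0)\}, \inputinst, \psi_{\gamma(\stateinst_0)}(\distset); [t_0,t_1])$. Applying the inverse map $\phi_{\gamma(\stateinst_0)}^{-1}$, which is well defined since $\phi_\alpha$ is invertible as an element of the transformation group, expresses the per-point reachable set as the inverse image of a reachable set whose initial state lies on the cross-section $\symcrosssection$.

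The next step is to relate this per-point reachable set to the common set $\reachset(\redstateset_0, \inputinst, \reddistset; [t_0,t_1])$ appearing in the statement. By construction $\phi_{\gamma(\stateinst_0)}(\stateinst_0) \in \redstateset_0$ and $\psi_{\gamma(\stateinst_0)}(\distset) \subseteq \reddistset$, so monotonicity of the reachable set in both its initial set and its disturbance set gives $\reachset(\{\phi_{\gamma(\stateinst_0)}(\stateinst_0)\}, \inputinst, \psi_{\gamma(\stateinst_0)}(\distset); [t_0,t_1]) \subseteq \reachset(\redstateset_0, \inputinst, \reddistset; [t_0,t_1])$. This is precisely where equality becomes inclusion: enlarging the initial and disturbance sets only adds trajectories. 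Pushing through $\phi_{\gamma(\stateinst_0)}^{-1}$ and taking the union over all $\stateinst_0 \in \stateset_0$ then yields the claimed containment.

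I expect the only delicate point to be bookkeeping rather than a genuine obstacle: one must keep the group element $\gamma(\stateinst_0)$ attached to the correct point throughout, and justify that the two monotonicity facts hold for the over-approximating reachable-set operator as well as the exact one, so that the corollary remains valid under the convention (stated just after \thmref{symmetric_reachsets}) that $\reachset$ may denote an over-approximation. The assumptions that $\phi_\alpha^b$ projects onto the last $\statedim - r$ coordinates and that $\symcrosssection$ fixes the symmetric coordinates are not actually needed for the inclusion itself; they serve to guarantee that $\redstateset_0$ varies only in the non-symmetric coordinates, which is what makes the right-hand side a genuinely lower-dimensional reachability computation.
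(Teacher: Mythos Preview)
Your proposal is correct and matches the paper's own proof essentially step for step: decompose $\reachset(\stateset_0,\ldots)$ as a union over $\stateinst_0\in\stateset_0$, apply Theorem~\ref{thm:symmetric_reachsets} with $\alpha=\gamma(\stateinst_0)$ using $\chi_\alpha=\mathrm{id}$, and then pass to the inclusion via monotonicity of $\reachset$ in both the initial set and the disturbance set. Your observation that the hypothesis on $\phi_\alpha^b$ is not needed for the inclusion itself, only for the lower-dimensional interpretation of $\redstateset_0$, is also accurate.
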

	
	The set $\redstateset_0$ in the corollary above belongs to the cross-section $\symcrosssection$ determined by the moving frame $\gamma$. Thus, the projection of any $\redstateinst_0 \in \redstateset_0$ to its first $r$ dimensions is equal to $c$. Computing the reachable set starting from $\redstateset_0$, then transforming it using the maps $\phi_{\alpha}$ is usually computationally cheaper than computing it starting from $\stateset_0$~\cite{sibai-atva-2019}. For instance, many reachability analysis tools, e.g.,~\cite{c2e2}, refine, i.e., partition, the initial sets to decrease over-approximation errors. Consequently, having lower-dimensional initial sets can result in exponentially-fewer partitions 
 when computing accurate reachable sets.  
 The union over $\stateset_0$ in the corollary
 can be over-approximated in a similar manner to computing reachable sets for discrete-time systems with uncertain parameters and initial states for a single time step.  
	Finally, the assumption in the corollary that $\chi_\alpha$ is an identity map is usually satisfied as the control input is mostly represented in the body coordinates in robotic systems, e.g., pushing the throttle or steering the  wheel. 

	\subsection{Discrete abstractions of continuous-time control systems}
	\label{sec:continuous_to_discrete_abstractions}
	
	In this section, we define discrete abstractions of continuous-time dynamical systems under reach-avoid specifications.
	When abstracting the transitions of continuous-time systems under reach-avoid specifications using discrete ones, it is important to distinguish the discrete states representing the continuous states that the continuous-time system {\em reaches} at the end of a time period from the ones it  {\em visits} as it evolves over the time period.
	We only need  the states that the system reaches to be inside the target set, but all of the states that it visits to be outside the avoid set. 
 Without this distinction, the initial state will always be a potential {\em next} state in the discrete transition, and unless it is already in the target set, we would not be able to add the initial state to the extended target set.  We define a transition map that makes such a distinction as follows.
	
	\begin{definition}[Non-deterministic discrete system]
	 A non-deterministic discrete system $\agent$ is a tuple $ (\discretestateset, \disinputset, \delta)$, where $\discretestateset$ and $\disinputset$ are two finite sets of symbols, and $\delta: \discretestateset \times \disinputset \rightarrow 2^{2^{\discretestateset} \times \discretestateset}$ is a non-deterministic transition map.
	\end{definition}
	
	We denote 
 $\delta$
 when restricted to the first entries of the pairs in its image by $\delta.\fullreach$, representing the sets of states that
 $\agent$ might {\em visit} in a transition, and when restricted to the second entries by $\delta.\lastreach$, representing the states that it might {\em reach} after the transition.
	
	The semantics of a discrete system $\agent$ are determined by its executions. An {\em execution} $\sigma$ of $\agent$ is a sequence of triples. The first entry of each triple is a finite set of symbols in $\discretestateset$, the second entry is a symbol contained in the first entry, and the last entry is a symbol in $\disinputset$, or is equal to $\bot$, indicating the end of the execution. More formally,
	    $\sigma := \langle (\{ \discretestateinst_{0}\}, \discretestateinst_0, \disinputinst_0), \dots, (\{ \discretestateinst_{l,j}\}_{j\in [z_l]}, \discretestateinst_l, \bot) \rangle$,
	where $\discretestateinst_0 \in \discretestateset$ is an initial state, and $\forall i \leq l$, $z_i$ is the number of states that the system might visit in the $i^{\mathit{th}}$ step. Moreover, $(\{\discretestateinst_{i+1,j}\}_{j \in [z_{i+1}]}, \discretestateinst_{i+1}) \in \delta(\discretestateinst_{i}, \disinputinst_i)$. 
 Finally, we say that $l$ is the {\em length} of $\sigma$.
	
	Given reach and avoid sets of discrete states $\discretestateset_r$ and $\discretestateset_a$, respectively, we say that an execution $\sigma$ of $\agent$
 satisfies the reach-avoid specification $(\discretestateset_r, \discretestateset_a)$ if $\discretestateinst_l \in \discretestateset_r$ and $\discretestateinst_{i,j} \notin \discretestateset_a$, for all $i \in [0, l]$ and $j \in [z_i]$.
	
	To be able to relate the behaviors of system~(\ref{eq:sys}) and that of a discrete one, we use the concept of simulation relations~\cite{TIOAmon}. 
 We define {\em forward simulation relations}
	as follows.
	
	\begin{definition}[Forward simulation relation]
	\label{def:fsr}
		Fix 
		(a) a discrete system $\agent = (\discretestateset, \disinputset, \delta)$;
		(b) an injective map $\inputmap: \disinputset \rightarrow \inputset$; (c) a time period $\timestep > 0$; 
		(d) a set of ordered pairs $\fsr \subseteq \stateset \times \discretestateset$.
		We say that $\fsr$ is a {\em forward simulation relation} (FSR) from the tuple $(\stateset,\inputset, \distset, f, \beta, \tau)$ to $\agent$,
		where $f$ is the right-hand-side of system~(\ref{eq:sys}),
		if 
	    (1) $\forall \stateinst \in \stateset$:
	    there exists $ \discretestateinst \in \discretestateset$, where $(\stateinst, \discretestateinst) \in \fsr$, and
	    (2) $\forall \stateinst \in \stateset$ and
	    $\discretestateinst \in \discretestateset$ such that $(\stateinst, \discretestateinst) \in \fsr$, 
     $\disinputinst_0 \in \disinputset$, 
     $t_e \in [0,\timestep]$, and 
	    a measurable $\distinst: [0,t_e] \rightarrow \distset$: 
	    there exists an execution 
	    $\sigma = \langle (\{ \discretestateinst_0\}, \discretestateinst_0, \disinputinst_0), (\{ \discretestateinst_{1,j}\}_{j\in [z_{1}]}, \discretestateinst_{1}, \bot) \rangle$ 
	    such that
	    $\forall t \in [0, t_e]$, $\exists j \in [z_1]$ such that $(\xi(\stateinst, \inputinst, \distinst; t),$ $s_{1,j}) \in \fsr$, where $\inputinst$ is the input signal 
     that is
     equal to $\beta(\disinputinst_0)$ over $[0,t_e]$,
	    and if $t_e = \timestep$, then $(\xi(\stateinst, \inputinst, \distinst; \timestep), s_1) \in \fsr$.
	\end{definition}

 Let $\constantinputsignals$ be the
	set  of left-piecewise-constant functions which only switch at time periods of $\timestep$ time units, mapping $\nnreals$ to the set $\inputmap(\disinputset)$. Then, the existence of a FSR implies that for every trajectory $\xi$ of system~(\ref{eq:sys}), with a control signal in 
	$\constantinputsignals$, there exists a corresponding execution $\sigma$ of the discrete system and $\sigma$ is said to {\em represent} $\xi$. If an FSR exists, we say that $\agent$ is an {\em abstraction} of $(\stateset,\inputset, \distset, f, \beta, \tau)$. 

	 Next, given reach and avoid sets $\stateset_r$ and $\stateset_a$ for system~(\ref{eq:sys}), we {\em abstract} them to the sets $\discretestateset_r$ and $\discretestateset_a$ of discrete states in $\discretestateset$. Such sets are not unique, but should satisfy:
  (1) $\forall \stateinst \in \stateset$:
	    if $\exists \discretestateinst \in \discretestateset_r$ such that $(\stateinst, \discretestateinst) \in \fsr$, then $\stateinst \in \stateset_r$, and
	(2) $\forall \stateinst \in \stateset_a$: 
	    the set $\{ \discretestateinst \in \discretestateset \ |\ (\stateinst, \discretestateinst) \in \fsr\}$ is non-empty and is a subset of $\discretestateset_a$.
 If an execution of $\agent$ satisfies the reach-avoid specification  $(\discretestateset_r, \discretestateset_a)$,  then all trajectories of system~(\ref{eq:sys}) represented by $\sigma$ satisfy the specification $(\stateset_r,\stateset_a)$. Further details are in Appendix~\ref{sec:specification_correspondence_discrete_continuous}.

	\vspace{-0.1in}
	\subsection{Discrete control synthesis}
	\label{sec:discrete_synthesis}
	
	In this section, we present a control synthesis algorithm for discrete abstractions of system~(\ref{eq:sys}).
	First, we define a controller of a discrete system $\agent$ to be a function  $\controller$ that maps $\discretestateset$ 
	to the set $\disinputset \cup \{\bot\}$. The {\em composition} of $\agent$ and $\controller$ results in a closed non-deterministic system, which we denote by the pair $(\agent,\controller)$. It can only start from a state in the set $R := \{ \discretestateinst \in \discretestateset | \controller[\discretestateinst] \neq \bot\}$.
	The executions of $(\agent,\controller)$ will have the same type as those of $\agent$ 
 with the actions taken following $\controller$, i.e., $\disinputinst_i = \controller[\discretestateinst_i]$.
	
	We say that $\controller$ satisfies a specification of $\agent$ if 
	$\forall \discretestateinst \in R$, all executions starting from $\discretestateinst$ of the system $(\agent, \controller)$ satisfy the specification.
	A trivial one 
	would map all $\discretestateinst \in \discretestateset$ to $\bot$, i.e., $R = \emptyset$. We seek non-trivial controllers with $R$ covering large parts of $\discretestateset$.
	
	If $\agent$ is an abstraction of  $(\stateset,\inputset, \distset, f, \beta, \tau)$
 with a FSR $\fsr$, then a corresponding controller $\controller$ {\em induces} a {\em zero-order-hold} controller $g$ for system~(\ref{eq:sys}). The controller $g$ periodically samples the state $\stateinst$ of system~(\ref{eq:sys})
	 every $\timestep$ time units and outputs 
	 $\beta(\controller[\discretestateinst])$ for the next period, where  $(\stateinst, \discretestateinst) \in \fsr$~\cite{tabuada_2009}.

	Given a discrete abstraction of system~(\ref{eq:sys}), one can use existing model checking tools to generate a corresponding controller $\controller$, such as the one   in Algorithm~\ref{code:discrete_synthesis} (see ~\cite{tabuada_2009}). 
	\begin{algorithm}
			\caption{Discrete control synthesis for reach-avoid specifications} 
			\label{code:discrete_synthesis}
			\begin{algorithmic}[1]
					\State \textbf{input:} $\agent = (\discretestateset, \disinputset, \delta), \discretestateset_a, \discretestateset_r$ \label{ln:input} 
				\State $\forall s \in \discretestateset, \controller[\discretestateinst] \gets \bot$;  $R \gets  \discretestateset_r$\label{ln:reachability_game_initialization}
				\While{$R$ did not reach a fixed-point} \label{ln:while_reachability_game}
				\State $R \gets \{ s \in \discretestateset | \exists \disinputinst \in \disinputset, \delta.\lastreach(\discretestateinst, \disinputinst) \subseteq R, \delta.\fullreach(\discretestateinst,\disinputinst) \cap \discretestateset_a= \emptyset\}$
				\EndWhile
				\State For each $\discretestateinst \in R$, assign $\controller[\discretestateinst]$ to its corresponding $\disinputinst \in \disinputset$.
				\State {\bf return: } $\controller$
			\end{algorithmic}
		\end{algorithm}
	
\vspace{-0.25in}
	\paragraph{Grid-based discrete abstractions}
	\label{sec:gridding}
 A common approach to generate a discrete abstraction of system~(\ref{eq:sys}) is to grid the compact state and control sets $\stateset$ and $\inputset$. 
 The resulting sets of states and control values $\discretestateset_\gridunder$ and $\disinputset_\gridunder$ have cardinalities equal to the number of cells in $Q_\state$ and $Q_\inputinst$, which we denote by $|Q_\state|$ and $|Q_\inputinst|$.
 The sets of reach and avoid states $\discretestateset_{\gridunder, r}$ and $\discretestateset_{\gridunder, a}$ consist of the states corresponding to the cells in $Q_\state$ that are subsets of $\stateset_r$ and intersect $\stateset_a$, respectively. 
 The FSR $\fsr_\gridunder$ maps any $\stateinst \in \stateset$ to the state in $\discretestateset_\gridunder$ corresponding to the cell in $Q_\state$ that $\stateinst$ belongs to.  The function $\inputmap_{\gridunder}$ maps each state in $\disinputset_\gridunder$ to the center of a unique cell in $Q_\inputinst$. Throughout the paper, we fix $\disinputset_\gridunder$ and $\inputmap_\gridunder$ and use them for all abstractions and drop their $\gridunder$ annotation. For any $(\discretestateinst_\gridunder, a) \in \discretestateset_\gridunder \times \disinputset$, one can compute $\delta_\gridunder$ for that pair by computing the reachable set of system~(\ref{eq:sys}) starting from the cell in $Q_\state$ corresponding to $\discretestateinst_\gridunder$ 
following $\beta(\disinputinst)$ for $\tau$ time units, using one of the existing reachable set-computation tools,  such as TIRA~\cite{TIRA_Meyer_2019}.
 \subsection{Symmetry-based discrete transitions computation}
	\label{sec:gridding}
	In Algorithm~\ref{code:symmetry_reachability_computations}, we present a more efficient way to compute the entries of $\delta_\gridunder$.
Algorithm~\ref{code:symmetry_reachability_computations} follows Corollary~\ref{cor:lower_dimensional_reachable_sets} and pre-computes the reachable sets starting from the cells 
 of the grid $\redgrid_{\state}$, which is $Q_{\state}$ 
	projected onto the cross-section $\symcrosssection$ defined by the moving frame $\gamma$.
 In other words, $\redgrid_{\state}$ is a grid over the set $\cup_{\stateinst \in \stateset} \phi_{\gamma(\stateinst)}(\stateinst)$ that has the same resolution as  $Q_\state$. The reachable sets are computed assuming disturbance sets of the form $\reddistset_j :=  \cup_{\stateinst_0 \in \stateset_j} \psi_{\gamma(\stateinst_0)}(\distset)$, where $\stateset_j = \cup_{i \in H_j}\ Q_{\state,i}$
and $H_j$ is the set of cells in $Q_\state$ that are projected to $\redgrid_{\state,j}$ in $\redgrid_\state$. It stores them in the data structure $\redreachset$. 
When a transition from an arbitrary $\discretestateinst_\gridunder \in \discretestateset_\gridunder$ \textbackslash$\discretestateset_a \cup \discretestateset_r$ and $\disinputinst \in \disinputset$ is needed during synthesis, Corollary~\ref{cor:lower_dimensional_reachable_sets} can be used to
  transform 
 the reachable set in $\redreachset$ that starts from the projection of $Q_{\state,i}$, the cell in $Q_\state$ corresponding to $\discretestateinst_\gridunder$, to $\symcrosssection$ from the relative coordinates back to the absolute ones. 
 Computing reachable sets  starting from lower-dimensional initial sets and transforming them is usually  computationally cheaper than computing ones starting from higher-dimensional initial sets. In addition, using this approach enables 
 the synthesis algorithm to compute exponentially-fewer reachable sets using reachability analysis engines, and transform the rest, potentially leading  to significant speedups.  
\vspace{-0.2in}

 	\begin{algorithm}
			\caption{Symmetry-based discrete transitions computation} 
			\label{code:symmetry_reachability_computations}
			\begin{algorithmic}[1]
			\For{$j \in  [|\redgrid_\state|] $ and $\disinputinst \in \disinputset$}
			 $\redreachset[j][a] \gets \reachset(\redgrid_{\state,j}, \beta(\disinputinst), \reddistset_j; [0,T])$ \label{ln:relative_reachset_computation}
			\EndFor
   \label{ln:relative_end_generate_transitions}
 			\end{algorithmic}
		\end{algorithm}

	\vspace{-0.4in}
	\section{Symmetry-based Abstraction}
	\label{sec:symmetry_based_abstraction_main}
	In this section, we describe a symmetry-based procedure that generates an abstraction $\agent_\abstractunder$ for $\agent_\gridunder$, shown in Algorithm~\ref{code:abstraction_quantized}.
 It takes as input a set $\discretestateset\undersym$ which has a one-to-one correspondence  with $\discretestateset_\gridunder$. We construct $\discretestateset\undersym$  in Algorithm~\ref{code:relative}.
	
	\paragraph{From absolute to relative coordinates}
\vspace{-0.2in}
  \begin{algorithm}
		\caption{Generation of (state, avoid set, reach set) tuples in relative coordinates } 
		\label{code:relative}
		\begin{algorithmic}[1]
		    \State {\bf input:} 
		    $Q_x$, $\stateset_r$, $\stateset_a$, $\{\phi_\alpha\}_{\alpha \in \liegroup}$, $\gamma$ 
		    \State $\discretestateset\undersym \gets \emptyset $
			\For{$i \in [|Q_\state|]$} \label{ln:relative_generate_Xsym}
			\State $\redgrid_{\state, i} \gets \cup_{\stateinst \in Q_{\state,i} }\phi_{\gamma(\stateinst)}(\stateinst)$; 
			 $\bar{\stateset}_a \gets \cup_{\stateinst \in Q_{\state,i}} \phi_{\gamma(\stateinst)}(\stateset_a)$; $\bar{\stateset}_r \gets \cap_{\stateinst \in Q_{\state,i}} \phi_{\gamma(\stateinst)}(\stateset_r)$
			\State $s\undersym \gets (\redgrid_{\state, i}, \bar{\stateset}_a, \bar{\stateset}_r)$; $\discretestateset\undersym \gets \discretestateset\undersym \cup \{s\undersym\}$ \label{ln:relative_last_line_generate_Xsym}
			\EndFor
			\State {\bf return: $\discretestateset\undersym$}
		\end{algorithmic}
	\end{algorithm}
	

	Similar to $\agent_\gridunder$, Algorithm~\ref{code:relative} creates a discrete state for each cell in $Q_\state$. For each $i \in [|Q_\state|]$, it creates a state $\discretestateinst\undersym \in \discretestateset\undersym$, which is a triple. The first entry, $\discretestateinst\undersym.\redgrid_{\state}$,
	is the $i^{\mathit{th}}$ 
     cell of $Q_{\state}$, $Q_{\state, i}$, 
projected onto 
 $\symcrosssection$  defined by $\gamma$.
	The set $\discretestateinst\undersym.\redgrid_{\state}$ is  $Q_{\state, i}$ 
	represented in the frames defined by its own states. Recall from Section~\ref{sec:cartan} that all states in $\discretestateinst\undersym.\redgrid_{\state}$ would have their first $r$ dimensions equal to a constant $c$.
	The second entry of $\discretestateinst\undersym$, $\discretestateinst\undersym.\redstateset_a$, represents the union of all the sets resulting from transforming $\stateset_a$ to the coordinate frames defined by the states in $Q_{\state, i}$. Thus, for each $\stateinst \in Q_{\state, i}$, $\discretestateinst\undersym.\redstateset_a$  is an over-approximation of $\stateset_a$ transformed by $\phi_{\gamma(\stateinst)}$. 
	The third element  $\discretestateinst\undersym.\redstateset_r$  represents the intersection of all the sets resulting from transforming $\stateset_r$ to the frames defined by the states in $Q_{\state, i}$. For each $\stateinst \in Q_{\state, i}$, $\discretestateinst\undersym.\redstateset_r$ is
	an under-approximation of $\stateset_r$ represented in the frame $\gamma(\stateinst)$, i.e., $\phi_{\gamma(\stateinst)}(\stateset_r)$.

 
	\paragraph{Generating the symmetry-based abstraction}

	\begin{algorithm}
		\caption{Symmetry-based abstraction construction} 
		\label{code:abstraction_quantized}
		\begin{algorithmic}[1]
		\State {\bf input: } $\discretestateset\undersym$, $\redreachset$, $\mathit{M} $ 
	\State $\discretestateset_\abstractunder \gets \{\discretestateinst_{\abstractunder,r}, \discretestateinst_{\abstractunder,a}\}$; 
 $ \forall \discretestateinst\undersym \in \discretestateset\undersym$, $\rv[\discretestateinst\undersym] \gets \bot$
			\For{$s\undersym \in \discretestateset\undersym$}
		    \If{$\discretestateinst\undersym.\redgrid_\state \subseteq \discretestateinst\undersym.\redstateset_r$}
      $\rv[\discretestateinst\undersym] \gets \discretestateinst_{\abstractunder,r}$, {\bf continue} \label{ln:symmetry_abstraction_reach_symbol_quantized}
		    \ElsIf{$\discretestateinst\undersym.\redgrid_\state \cap \discretestateinst\undersym.\redstateset_a \neq \emptyset$}
      $\rv[\discretestateinst\undersym] \gets \discretestateinst_{\abstractunder, a}$, {\bf continue} \label{ln:symmetry_abstraction_avoid_symbol_quantized}
		    \EndIf
            \State Let $j$ be the index of $\discretestateinst\undersym$ in $\redreachset$\label{ln:symmetry_abstraction_finding_index_in_reachdict}
            \State 
            $\mathit{obstructed} \gets \emptyset$ \label{ln:symmetry_abstraction_initializing_obstructed}
            \While{$|\mathit{obstructed}| < M$} \label{ln:while_over_obstructed}
              \State $a^* \gets \arg\!\min_{a \in \mathcal{U} \textbackslash \mathit{obstructed}} (\redreachset[j][a].\lastreach, \discretestateinst\undersym.\redstateset_r)$ \label{ln:symmetry_abstraction_finding_index_quantized_target}
              \If{$\discretestateinst\undersym.\redstateset_a \cap \redreachset[j][a].\fullreach \neq \emptyset$} \label{ln:symmetry_abstraction_checking_obstacle_intersection} 
              $\mathit{obstructed} \gets \mathit{obstructed} \cup \{a^*\}$
              \label{ln:symmetry_abstraction_adding_index_to_obstructed}
              \Else
              \If{$ (j,a^*) \notin \discretestateset_\abstractunder$}
              $\discretestateset_\abstractunder \gets \discretestateset_\abstractunder \cup \{ (j,a^*)\}$;   \label{ln:symmetry_abstraction_creating_new_abstract_state}
              \EndIf
              \State $\rv[\discretestateinst\undersym] \gets (j,a^*)$;
               {\bf break}
              \EndIf
            \EndWhile
            \If{$|\mathit{obstructed}| = M$} \label{ln:symmetry_abstraction_if_reached_obstructed_threshold}
             $\rv[\discretestateinst\undersym] \gets \discretestateinst_{\abstractunder, a}$ \label{ln:symmetry_abstraction_adding_to_avoid_symbol_quantized}
            \EndIf
		    \EndFor
		    \State {\bf return: } $\discretestateset_\abstractunder$, $\rv$ \label{ln:symmetry_abstraction_return}
		\end{algorithmic}
	\end{algorithm}

	Algorithm~\ref{code:abstraction_quantized} is an abstraction procedure that takes as input the 
 output $\discretestateset\undersym$ of Algorithm~\ref{code:relative}, 
 the dictionary $\redreachset$ computed in Algorithm~\ref{code:symmetry_reachability_computations}, and an integer $M \in [|\mathcal{U}|]$.
	It outputs a dictionary $\rv$ and a set $\discretestateset_\abstractunder$. $\rv$ maps every state in $\discretestateset\undersym$ (equivalently, $\discretestateset_\gridunder$) to some state in $\discretestateset_\abstractunder$. The subscript $\mathit{gs}$ standing for {\em grid}-based to {\em symmetry}-based abstraction.
 

	It  proceeds as follows. It first initializes $\discretestateset_\abstractunder$ to be a set of two states: $\discretestateinst_{\abstractunder,r}$, the {\em reach} state, and $\discretestateinst_{\abstractunder,a}$, the {\em avoid} state. Then, it updates $\rv$ in line~\ref{ln:symmetry_abstraction_reach_symbol_quantized} to represent any  $\discretestateinst\undersym \in \discretestateset\undersym$ whose 
 $\discretestateinst\undersym.\redstateset_r$ contains 
 $\discretestateinst\undersym.\redgrid_\state$ by
	$\discretestateinst_{\abstractunder,r}$ in $\discretestateset_\abstractunder$. Such states correspond to the cells in $Q_\state$ that 
 are inside 
 $\stateset_r$. Similarly, it updates $\rv$ in line~\ref{ln:symmetry_abstraction_avoid_symbol_quantized} to represent any  $\discretestateinst\undersym$ whose 
 $\discretestateinst\undersym.\redstateset_a$ intersects 
 $\discretestateinst\undersym.\redgrid_\state$, by $\discretestateinst_{\abstractunder,a}$. These are the cells in $Q_\state$ which intersect $\stateset_a$. It later maps more cells in $Q_x$ to $\discretestateinst_{\abstractunder,a}$ in line~\ref{ln:symmetry_abstraction_adding_to_avoid_symbol_quantized}. 
 In our implementation of Algorithm~\ref{code:abstraction_quantized}, we store the {\em last} reachable sets in $\redreachset[j][:]$ in an R-tree~\cite{rtree_original_paper}\footnote{We use \url{ https://github.com/libspatialindex/libspatialindex}.}.
Now, if some $\discretestateinst\undersym$ is not mapped to $\discretestateinst_{\abstractunder,r}$ or $\discretestateinst_{\abstractunder,a}$,  Algorithm~\ref{code:abstraction_quantized} 
either maps it to another abstract state in $\discretestateset_\abstractunder$ 
or creates a new one in lines~\ref{ln:symmetry_abstraction_finding_index_in_reachdict}-\ref{ln:symmetry_abstraction_adding_to_avoid_symbol_quantized}. 
 Algorithm~\ref{code:abstraction_quantized} first retrieves the index $j$ of the cell  $\discretestateinst\undersym.\redgrid_\state$ in the grid  $\redgrid_\state$ in line~\ref{ln:symmetry_abstraction_finding_index_in_reachdict}. 
 Then, in lines~\ref{ln:symmetry_abstraction_initializing_obstructed}-\ref{ln:symmetry_abstraction_creating_new_abstract_state}, it finds the control symbol that results in the closest reachable set in $\redreachset[j][:]$ to the relative target $\discretestateinst\undersym.\redstateset_r$ that does not intersect with relative obstacles $\discretestateinst\undersym.\redstateset_a$, according to some distance metric. Although this problem can be formulated as a quadratic program when using Euclidean distance, the number of constraints would be $O(|\disinputset|)$, which is exponential in the input dimension, and thus computationally expensive. Instead, in our implementation of  Algorithm~\ref{code:abstraction_quantized}, we approximate the minimizer. Our implementation first finds the closest point in the relative target, where the latter is a polytope, to the center of $\discretestateinst\undersym.\redgrid_x$
 by solving a simple quadratic program with the constraints being the half spaces of the polytope. 
 Then, it retrieves a set of closest hyperrectangles of a predetermined size in the R-tree storing the last reachable sets in $\redreachset[j][:]$ and their associated control symbols. This computation  is efficient as an R-tree is designed to optimize such a task~\cite{rtree_original_paper}. We maintain a set of {\em visited} control symbols that have been investigated in previous iterations of the while-loop, and remove them from the retrieved set. At every iteration, we increase the size of the set we request from the R-tree, as some of the returned ones would have been visited in previous iterations. 
 Algorithm~\ref{code:abstraction_quantized} also maintains a set $\mathit{obstructed}$ of control symbols  that have been found to result in reachable sets that intersect $\discretestateinst\undersym.\redstateset_a$.
 After that, it iteratively checks if there exists an $a^*$ among the retrieved control symbols results in a full reachable set that  does not intersect $\discretestateinst\undersym.\redstateset_a$.
 If that is the case, a new abstract state, denoted by  $(j,a^*)$, is created if it is not already in $\discretestateset_\abstractunder$. 
After that $\discretestateinst\undersym$ is mapped to that state in $\rv$, i.e., it sets $\rv[\discretestateinst\undersym]$ to $(j,a^*)$. Otherwise, 
they are all added to $\mathit{obstructed}$.
If  $|\mathit{obstructed}|$ becomes equal to $M$, the while-loop terminates and $\discretestateinst\undersym$ is mapped to  $\discretestateinst_{\abstractunder, a}$ in $\rv$.

	\section{Symmetry-based control synthesis algorithm}
	
	
	In this section, we present the main contribution of the paper: a control synthesis algorithm that exploits the symmetry-based abstraction as a guide to accelerate the search for specification-satisfying control symbols.
 The pseduocode is in Algorithm~\ref{code:ssynthesis_quantized}. 
	\begin{algorithm}

		\caption{Control synthesis using symmetry-based abstractions} 
		\label{code:ssynthesis_quantized}
		\begin{algorithmic}[1]
		    \State {\bf input:} $Q_\state, Q_\inputinst$, $\discretestateset\undersym, \redreachset$, $\discretestateset_\abstractunder, \rv$, $N$
        \State $\forall \discretestateinst_\abstractunder, \cache[\discretestateinst_\abstractunder] \gets  \{ (0, \discretestateinst_\abstractunder[1])\}$; $\forall \discretestateinst\undersym \in \discretestateset\undersym, \controller\undersym[\discretestateinst\undersym] \gets \bot$ \label{ln:ssynthesis_quantized_initialize_cache} 
     \State $R \gets \rv^{-1}[\discretestateinst_{\abstractunder,r}]$; 
     $E \gets \discretestateset\undersym$ \textbackslash\ $(R \cup \rv^{-1}[\discretestateinst_{\abstractunder,a}])$;
     $\mathit{progress} \gets 1$ 
		    \While{$\mathit{progress}$} \label{ln:ssynthesis_quantized_while}
            \State $\mathit{progress} \gets 0$ \label{ln:ssynthesis_quantized_initialize_progress_boolean}
            \For{ $\discretestateinst\undersym \in E$ } \label{ln:ssynthesis_quantized_for_loop_over_concrete_states}
            \State $A \gets$ control symbols in $\cache[\rv[\discretestateinst\undersym]]$ \label{ln:ssynthesis_quantized_initialize_A_to_cache}
            \For{$\mathit{itr} \in [N]$} \label{ln:ssynthesis_quantized_for_over_batches}
            \For{$a \in A$} \label{ln:ssynthesis_quantized_for_over_actions}
            \If{$\delta\undersym.\lastreach(\discretestateinst\undersym, \disinputinst) \subseteq R, \delta\undersym.\fullreach(\discretestateinst\undersym,\disinputinst) \cap \rv^{-1}[\discretestateinst_{\abstractunder,a}]= \emptyset$} \label{ln:ssynthesis_quantized_if_reach_avoid}
            \State $R \gets R \cup \{\discretestateinst\undersym\}$; 
            $\controller\undersym[\discretestateinst\undersym] \gets \disinputinst$; 
            $\mathit{progress}\gets 1$; 
            \If{$\exists\ \mathit{count} \geq 0$, $( \mathit{count},\disinputinst) \in \cache[\rv[\discretestateinst\undersym]]$}
            \State $\cache[\rv[\discretestateinst\undersym]].\mathit{replace}((\mathit{count},\disinputinst), (\mathit{count} + 1,\disinputinst))$ \label{ln:ssynthesis_quantized_cache_update}
            \Else\ 
             $ \cache[\rv[\discretestateinst\undersym]] \gets \cache[\rv[\discretestateinst\undersym]].\mathit{insert} ((1,\disinputinst))$  \label{ln:ssynthesis_quantized_cache_insertion} 
            \EndIf
            \State {\bf break} \label{ln:ssynthesis_quantized_endwhile}
            \EndIf
            \EndFor
            \If{$\discretestateinst\undersym \notin R$}
             $A \gets $ new (greedy or random) subset of $\disinputset $ \label{ln:ssynthesis_quantized_new_control_set_to_explore}
            \EndIf
            \EndFor 
		     \EndFor
            \State $E \gets$ neighborhood of newly added states to $R$ \label{ln:ssynthesis_quantized_updating_E}
            \EndWhile \label{ln:ssynthesis_quantized_endwhile}
			\State {\bf return } $\controller\undersym$
		\end{algorithmic}
	\end{algorithm}
	In addition to the outputs of Algorithms~\ref{code:symmetry_reachability_computations},~\ref{code:relative}, and~\ref{code:abstraction_quantized}, Algorithm~\ref{code:ssynthesis_quantized} takes 
 an upper bound $N$ on the number of different subsets of $\disinputset$ to be searched for a specification-satisfying control symbol for each state symbol in each synthesis iteration.
 It  outputs a controller $\controller\undersym$ for $\agent\undersym$ (equivalently, $\agent_\gridunder$) satisfying the specification $(\rv^{-1}[\discretestateinst_{\abstractunder,r}], \rv^{-1}[\discretestateinst_{\abstractunder,a}])$.

	Algorithm~\ref{code:ssynthesis_quantized} proceeds as follows: 
 it initializes the dictionary $\cache$ that maps each $\discretestateinst_\abstractunder \in \discretestateset_\abstractunder$ to a list of pairs of non-negative integers (scores) and control symbols in $\disinputset$. That list is maintained to be sorted in decreasing order of the scores. For each such pair, the first element, i.e., the score, represents the number of times its second element, i.e., the control symbol, has been found to be specification-satisfying for a state in $\rv^{-1}[\discretestateinst_\abstractunder]$. The list is initialized to have a single pair $(0, \discretestateinst_\abstractunder[1])$, where $\discretestateinst_\abstractunder[1]$ is the second element, the control, in the pair defining $\discretestateinst_\abstractunder$.
 Also, as Algorithm~\ref{code:discrete_synthesis}, it  initializes the set $R$ to be $\discretestateset_{\mathit{rel},r}$ and $\controller\undersym$ to a dictionary mapping every $\discretestateinst\undersym \in \discretestateset\undersym$ to $\bot$. In addition, Algorithm~\ref{code:ssynthesis_quantized} initializes a set $E$ to the set of  states in  $\discretestateset\undersym$ that are not reach or avoid ones.
	
Algorithm~\ref{code:ssynthesis_quantized} iteratively synthesizes $\controller\undersym$  in lines~\ref{ln:ssynthesis_quantized_while}-\ref{ln:ssynthesis_quantized_endwhile}. 
 Instead of searching $\disinputset$ in an {\em arbitrary} order for a specification-satisfying control 
 as Algorithm~\ref{code:discrete_synthesis},
 Algorithm~\ref{code:ssynthesis_quantized} searches $\disinputset$ according 
 to the order they appear in the cache. 
 Particularly, it explores the symbols in $\disinputset$ in batches. The batch at any iteration  is stored in the set $A$. It is initialized at line~\ref{ln:ssynthesis_quantized_initialize_A_to_cache} to be the list of control symbols appearing 
 in the cache entry corresponding to the representative state $\rv[\discretestateinst\undersym]$ of $\discretestateinst\undersym$ in $\discretestateset_\abstractunder$. That list is always sorted in the decreasing order of the scores. 
 Thus, Algorithm~\ref{code:ssynthesis_quantized} checks the control symbols with higher scores first in the for-loop in line~\ref{ln:ssynthesis_quantized_for_over_actions}. If none of the symbols in the cache are specification-satisfying, Algorithm~\ref{code:ssynthesis_quantized} samples a new subset of control symbols in $\disinputset$ to investigate in the next iteration of the for-loop at line~\ref{ln:ssynthesis_quantized_for_over_batches}. It either samples that batch greedily or uniformly at random. The greedy approach is the same one we followed during the abstraction  process in Algorithm~\ref{code:abstraction_quantized}, i.e., the ones closest to the relative target.

 If some $a \in A$ satisfies the specification for a state $\discretestateinst\undersym$, then  Algorithm~\ref{code:ssynthesis_quantized} increments
 its score in $\cache[\rv[\discretestateinst\undersym]]$, if it exists, or adds a new pair $(1,a)$ to that entry of the cache,  otherwise. In either case, when $\cache[\rv[\discretestateinst\undersym]]$ is modified, it is re-sorted to remain  ordered in a decreasing order of scores.

 In an ideal scenario, for any $\discretestateinst_\abstractunder \in \discretestateset_\abstractunder$, the greedy control $\discretestateinst_\abstractunder[1]$ found in the abstraction process would suffice for satisfying the specification for all $\discretestateinst\undersym \in \rv^{-1}[\discretestateinst_\abstractunder]$. In that case, when Algorithm~\ref{code:ssynthesis_quantized} terminates, $\cache[\discretestateinst_\abstractunder]$ would be a singleton list containing the pair $(\mathit{count}, a)$, where $\mathit{count} = |\rv^{-1}[\discretestateinst_\abstractunder]|$ and $a = \discretestateinst_\abstractunder[1]$. 
 However, that might not be the case in most scenarios. Different states in $\rv^{-1}[\discretestateinst_\abstractunder]$  might require different control to achieve the specification.
 The control symbol $\discretestateinst_\abstractunder[1]$ is a greedy choice to minimize the distance to the target. However, as well known from optimal control theory and dynamic programming, a greedy decision is not necessarily the optimal one. There are non-symmetric state coordinates 
 and further away obstacles that affect the optimality of the control. Algorithm~\ref{code:ssynthesis_quantized} adds states that are closer to the target to the extended target first, same as Algorithm~\ref{code:discrete_synthesis}.  
 Thus, at a certain iteration of the while-loop in line~\ref{ln:ssynthesis_quantized_while} and iteration of the for-loop in line~\ref{ln:ssynthesis_quantized_for_loop_over_concrete_states}, the 
 greedily-chosen control symbol
 might not satisfy the specification while the optimal one does, which requires the for-loop in line~\ref{ln:ssynthesis_quantized_for_over_batches} to iterate until finding it. 
 Our intuition is that in most environments,
 states with the same symmetry-based abstract state will require a small set of control symbols to achieve the reach-avoid specification, resulting in cache entries with short lists after the termination of Algorithm~\ref{code:ssynthesis_quantized}. That allows considering large $\disinputset$ without incurring the expense of exploring it fully in an arbitrary order for every state in $\discretestateset\undersym$ at every synthesis iteration. 

 Additionally, after each synthesis iteration, a non-optimized algorithm such as Algorithm~\ref{code:discrete_synthesis} would only remove the states that have been added to $R$ in that iteration from $E$, i.e., update $E$ to be $E$ \textbackslash\ $R$. However, given all the reachable sets one would need to compute $\delta_\gridunder$,
  they can compute the maximum travelled distance in each dimension in one step starting from any state and following any control symbol.
  Notice that it would sufficient for Algorithm~\ref{code:ssynthesis_quantized} to update $E$ in line~\ref{ln:ssynthesis_quantized_updating_E} to be the states in $\discretestateset\undersym$ that correspond to cells in $Q_x$ that are within these maximum per-dimension distances from the newly added states to $R$ in the same synthesis iteration. That is because other states further away cannot be added to $R$ in the next synthesis iteration as their reachable sets following any control cannot intersect the newly added states to $R$. 
 This significantly reduces the number of states that Algorithm~\ref{code:ssynthesis_quantized} has to explore. 
 Now, assume that for any $\alpha \in \liegroup$, $\phi_\alpha$ is a rigid motion transformation, i.e., preserves the distance and angle measures. Then, the maximum travelled distance in each dimension
 in the reachable sets stored in $\redreachset$ would be the same as that in all the reachable sets needed if one had to construct $\delta_\gridunder$, and thus there is no need to compute the latter. That is what we follow in our implementation of Algorithm~\ref{code:ssynthesis_quantized}. 
Finally, consider the case when $N$ and the subsets selected in line~\ref{ln:ssynthesis_quantized_new_control_set_to_explore} are sufficiently large to cover all symbols in $\disinputset$ in every iteration of the for-loop over $E$ 
in line~\ref{ln:ssynthesis_quantized_for_loop_over_concrete_states}. Also, assume that computing the reachable sets for the transitions by transforming those in $\redreachset$ using Corollary~\ref{cor:lower_dimensional_reachable_sets} results in the same over-approximation errors as computing them from scratch using existing tools. Then, Algorithm~\ref{code:ssynthesis_quantized} would be the same as Algorithm~\ref{code:discrete_synthesis} with the only two differences being the order by which the control symbols are explored and the restriction of the states to explore at a given iteration to those whose neighborhood has changed in the past iteration. Thus, it retains the same correctness guarantees as Algorithm~\ref{code:discrete_synthesis}.
	\section{Case study}
	\label{sec:experiments}
	In this section, we describe our results from synthesizing a controller for a 1:30 scale model of a platform supply vessel using Algorithm~\ref{code:ssynthesis_quantized}. 
 The dynamics of this model have been described in~\cite{marine_craft_handbook} and used as a case study in~\cite{Meyer2020_control_synthesis_ship_docking}.
	Its kinematics can be approximated using the ODE~\cite{marine_craft_handbook,Meyer2020_control_synthesis_ship_docking}: $\dot{\eta} = R(\theta) \nu + v_c$, 
		where $\eta = [N;E;\theta] \in \reals^3$ is the state of the ship representing its South-North and West-East position and its heading angle, respectively. The control input $\nu$ is 3-dimensional.
  $R(\theta)$ is a rotation matrix that given $\theta$, rotates $\nu$ from the body coordinates to the global ones, where $\theta = 0$ points to the North. Finally, $v_c$ represents the disturbance resulting from water-current velocities.
		We consider the same environment and reach-avoid specification of Meyer et al.~\cite{Meyer2020_control_synthesis_ship_docking}, where the ship is required to dock while avoiding obstacles in a rectangular pier. We make a slight modification to represent boundaries as obstacles. Further details are in Appendix~\ref{sec:appendix_numerical_details}.
		
		Let $\liegroup$ be $SE(2)$, the special Euclidean group, representing 2-dimensional rotations and 3-dimensional translations. Then, the kinematics of the ship 
  are $\liegroup$-equivariant, where 
		the transformation group $\{h_\alpha = (\phi_\alpha, \chi_\alpha, \psi_\alpha)\}_{\alpha\in \liegroup}$ is defined as follows:
		$ \phi_{\alpha}(\eta) = R_\alpha^\intercal (\eta - \eta_\alpha)$,
			$\chi_{\alpha}(\nu) = I \nu = \nu$,
			$\psi_{\alpha}(v_c) = R_\alpha^\intercal v_c$,
  where $\eta_\alpha$ and $R_\alpha$ are the 3-dimensional translation vector and rotation matrix corresponding to $\alpha$.
		We define the moving frame $\gamma$ to be the one that maps any $\stateinst = [N;E;\theta]$ to the $\alpha \in \liegroup$ with $R_\alpha = R(\theta)$ and $\eta_\alpha = \stateinst$. Thus, $\symcrosssection$ consists of a singleton state: the origin. 

  We implemented our algorithms 
  in Python. We used TIRA~\cite{TIRA_Meyer_2019} (implemented in Matlab) to compute the reachable sets in Algorithm~\ref{code:symmetry_reachability_computations}. The reachable sets are represented as lists of time-annotated axis-aligned hyper-rectangles in $\reals^3$. 
  We used the Polytope library\footnote{\url{https://pypi.org/project/polytope/}} to translate and rotate them. 
  %
  We ran Algorithm~\ref{code:abstraction_quantized} in parallel to generate the symmetry-based abstractions. It is a naturally parallelizable algorithm since the concrete states determine their representative abstract states independent from each other.
  We ran the control synthesis part of Algorithm~\ref{code:ssynthesis_quantized} sequentially, although it can be ran in parallel as well.  We considered several choices of  resolutions of the grid $Q_x$ and fixed $|Q_u| = 9^3$ in all experiments.  
  The results are shown in Table~\ref{tab:results_quantized}. 
\begin{table}[!htp]
    \centering
%
\caption{Experimental results.  The table presents results from runs of Algorithm~\ref{code:ssynthesis_quantized} with different parameters and resolutions of $Q_\state$. It shows: \# cells in $Q_x$ that are not avoid or reach ones $|\discretestateset_{\gridunder}^*| = |\discretestateset_{\gridunder} \textbackslash \discretestateset_{\gridunder,a} \cup \discretestateset_{\gridunder,r}|$; \# abstract states $|\discretestateset_{\abstractunder}|$; \# states mapped to $\discretestateinst_{\abstractunder,a}$ in line~\ref{ln:symmetry_abstraction_if_reached_obstructed_threshold} of Algorithm~\ref{code:abstraction_quantized} (\#CAO);  total \# states added to $R$ during synthesis (\#Ctr); min / average / median / max lengths of lists in $cache$; average of $|E|/(|\discretestateset_\gridunder^*| - |R| + |\discretestateset_{\gridunder,r}|)$ over synthesis iterations (Exp, for states-to-be-explored); average / max of the lengths of paths from states in $R$ to the reach set (|Path|);  the abstraction, synthesis, and total computation times in seconds (At, St, and Tt).\protect\label{tab:results_quantized}}
    \begin{tabular}{lr|rrrrrrrrrr}\toprule
         Str & $Q_x$ & $|\discretestateset_
        \gridunder^*|$ & $|\discretestateset_\abstractunder|$ & \#CAO & \#Ctr & $\cache$ & Exp &  |Path| & At & St & Tt \\  \midrule
        0 & $1$ & 6580 & 0 & 0 & 1750 & - & 1 & 4.6 / 11 & - & 2496  & 2496  \\
        0.5 & $1$ & 6580 & 0 & 0 & 1750 & - & 1 & 5 / 13 & - & 2437  & 2438  \\
        1 & 1 & 6580 & 143 & 210 & 1750 & 1 / 1.5 / 1.0 / 13 & 1 & 4.6 / 11 & 108 & 3683  & 3791  \\ 
        2 & 1 & 6580 & 143 & 210 & 1750 & 1 / 1.5 / 1.0 / 12 & 1 & 4.6 / 11 & 105 & 2209  & 2315  \\ 
        3 & 1 & 6580 & 35 & 210 & 1750 & 1 / 2.9 / 2.0 / 17 & 1 & 4.6 / 11 & 94 & 2590  & 2685  \\ 
        4 & 1 & 6580 & 143 & 210 & 1750 & 1 / 1.5 / 1.0 / 13 & 0.40 & 4.6 / 11 & 106 & 3017  & 3123  \\ 
        5 & 1 & 6580 & 143 & 210 & 1750 & 1 / 1.5 / 1.0 / 16 & 0.40 & 4.7 / 11 & 105 & 1373  & {\bf 1479}  \\ 
        6 & 1 & 6580 & 35 & 210 & 1750 & 1 / 2.9 / 2.0 / 17 & 0.40 & 4.6 / 11 & 97 & 2465  & 2562  \\ \hline
        0 & 2 & 34544 & 0 & 0 & 31056 & - & 1 & 10.2 / 28 & - & 11895 & 11895 \\ 
        0.5 & 2 & 34544 & 0 & 0 & 31056 & - & 1 & 10.5 / 30 & - & 11363 & 11393 \\
        1 & 2 & 34544 & 167 & 3488 & 31056 & 2 / 7.4 / 5.0 / 54 & 1 & 10.2 / 28 & 567 & 17096 & 17663 \\ 
        2 & 2 & 34544 & 167 & 3488 & 31056 & 1 / 7.3 / 4.5 / 47 & 1 & 10.4 / 30 & 515 & 9663 & 10179 \\ 
        3 & 2 & 34544 & 34 & 3488 & 31056 & 2 / 12.3 / 4 / 106 & 1 & 10.2 / 28 & 510 & 12072 & 12583 \\ 
        4 & 2 & 34544 & 167 & 3488 & 31056 & 2 / 6.7 / 5.0 / 37 & 0.32 & 10.2 / 29 & 581 & 13953 & 14534 \\ 
        5 & 2 & 34544 & 167 & 3488 & 31056 & 2 / 7.1 / 5.0 / 43 & 0.29 & 10.3 / 34 & 514 & 7335 & {\bf 7850} \\ 
        6 & 2 & 34544 & 34 & 3488 & 31056 & 2 / 11.6 / 4 / 97 & 0.32 & 10.2 / 29 & 517 & 11679 & 12197 \\ \hline
        0 & 3 & 58030 & 0 & 0 & 57180 & - & 1 & 11.0 / 32 & - & 20910 & 20938 \\
        5 & 3 & 58030 & 215 & 850 & 57180 & 1 / 7.1 / 4.0 / 50 & .47 & 10.3 / 32 & 831 & 12999 & {\bf 13832} \\  \hline
        0 & 4 & 117340 & 0 & 0 & 105500 & - & 1 & 8.6 / 24 & - & 39204 & 39260 \\ 
        5 & 4 & 117340 & 149 & 11840 & 105500 & 1 / 12.5 / 8.0 / 67 & 0.28 & 8.4 / 25 & 1870 & 21117 & {\bf 22989} \\ 
        \bottomrule
    \end{tabular}
\end{table}

We ran Algorithm~\ref{code:ssynthesis_quantized} using different parameters to identify the benefits of the different optimizations that we  introduced. We call each choice a {\em strategy} and try 8 of them. Strat. $0$ is the {\em baseline} and the traditional implementation of Algorithm~\ref{code:discrete_synthesis}  where states and controls are explored in an arbitrary order. Strat. $0.5$ differs from Strat. $0$ by exploring only 400 unique controls sampled uniformly at random i.i.d at every state in every synthesis iteration. The rest correspond to running Algorithm~\ref{code:ssynthesis_quantized} with $N$ and batch sizes chosen to explore all symbols (strategies $1$, $3$, $4$, and $6$) or 400 unique symbols (including those in the cache) (strategies $2$ and $5$) in $\disinputset$ in the loop in  line~\ref{ln:ssynthesis_quantized_for_over_batches}. We use the same number of allowed controls to explore in creating the abstractions in Algorithm~\ref{code:abstraction_quantized}.    Strategies $1$, $2$, and $3$ update $E$ in line~\ref{ln:ssynthesis_quantized_updating_E} to $E$ \textbackslash\ $R$ while $4$, $5$, and $6$ update it to the neighborhood of newly added states to $R$. Strategies $3$ and $6$ consider an arbitrary order of control symbols, instead of the greedy ones, for the abstraction and for control synthesis. In other words, for each $\discretestateinst\undersym \in \discretestateset\undersym$, Algorithm~\ref{code:abstraction_quantized} following that strategy iterates over the control symbols according to their indices and the first symbol that does not result in collision is used for the abstraction. Similarly, Algorithm~\ref{code:ssynthesis_quantized} following that strategy iterates over the controls in the cache first, and if none is specification-satisfying, it iterates over the rest in an arbitrary order.
%
Moreover, we considered different grid resolutions for $Q_x$: $1$ corresponds to $30\times 30\times 30$ ($30$ partitions per dimension); $2$ corresponds to $50 \times 50 \times 50$; $3$ corresponds to $60\times 70 \times 50$; and $4$ corresponds to $80 \times 100\times 50$. 

From Table~\ref{tab:results_quantized}, we can see that Strat. 5, the one that uses all optimizations we introduced (symmetry-abstraction for choosing and updating the cache, updating $E$ to neighborhoods of newly added states to $R$, and considering only subset of $\disinputset$ in the order of the scores) result in the least Tt with respect to all other strategies at almost no cost in \#Ctr and |Path|. It results in a 1.7$\times$, 1.5$\times$, 1.5$\times$, and 1.7$\times$ speedup over the benchmark runs for resolutions $1$, $2$, $3$, and $4$, respectively. 

Also, we can see that resolutions $2$, $3$, and $4$ were sufficient to find a specification-satisfying control for all states in $\discretestateset\undersym$ corresponding to states in $\discretestateset_\gridunder^*$ and the abstraction was able to find all states that cannot be controlled to satisfy the specification and map them to $\discretestateinst_{\abstractunder,a}$, before synthesis. Additionally, despite Strat. 6 (which explores control symbols in an arbitrary order for abstraction and synthesis) resulting in fewer abstract states than Strat. 5, the $\cache$ did not help it in decreasing St and resulted in a St similar to that of the benchmark.  If we add the overhead of At, it required more computation time than the baseline. Its sub-optimality is also evident from the lengths of the lists in the cache being much longer than those of Strat. 5, which demonstrate that more states represented by the same abstract state required different controls to satisfy the specification. Moreover, Strat. 5 is significantly faster than Strat. 2, which shows the benefit of restricting $E$ to the neighborhood of newly added states to $R$ instead of exploring all remaining states in each synthesis iteration. Yet, we found that iterating over the newly added states to $R$ after each synthesis iteration and taking the union of their neighborhoods, itself adds an overhead of computation time. Despite that overhead, it still resulted in more efficient computation time than without that optimization when combined with the symmetry-based abstraction and synthesis. Otherwise, the overhead seems to balance out its benefit (Strat. 6 versus Strat. 3). We also found that when exploring the control symbols during synthesis in a similar manner to how we construct the symmetry-based abstraction also adds an overhead for finding the closest reachable sets and for excluding those control symbols returned by the R-tree that are already explored. That overhead is worth it when exploring a subset of $\disinputset$, i.e., small $N$ and batch sizes (results of Strat. 2 and Strat. 5 versus Strat. 0.5). However, it is not when we are exploring all symbols in $\disinputset$ (results of Strat. 1 and Strat. 4 versus the baseline).
	\section{CONCLUSIONS}
	We proposed an algorithm to accelerate symbolic control synthesis for continuous-time dynamical systems and reach-avoid specifications. Our algorithm exploits dynamical symmetries even in non-symmetric environments through an abstraction approach. Our algorithm uses the abstraction to explore control symbols more selectively during synthesis.
 Our approach is general; it allows exploitation of any symmetries in the form of Lie groups that the system possesses. 
 We show experimental results demonstrating the effectiveness of our approach.

 \section{Acknowledgements}

 This work has been supported by the National Science Foundation under grant number CNS-2111688. 




	\bibliographystyle{abbrv}
	\bibliography{hussein}
	
	
 \appendix



  \section{Proof of Corollary~\ref{cor:lower_dimensional_reachable_sets}}
 \label{sec:appendix_proofs}

\symmetryreachablesets*

	\begin{proof}
	The proof follows from applying Theorem~\ref{thm:symmetric_reachsets} to the reachable sets starting from each initial state in $\stateset_0$: $\reachset(\stateset_0, \inputinst, \distset;[t_0,t_1]) = $
	    \begin{align}
		&\cup_{\stateinst_0 \in \stateset_0} \reachset(\stateinst_0, \inputinst, \distset;[t_0,t_1])  \nonumber \\
		&\hspace{0.5in}\text{[by the definition of $\reachset$]} \nonumber\\
		&= \cup_{\stateinst_0 \in \stateset_0} {\phi_{\gamma(\stateinst_0)}}^{-1}(\reachset(\phi_{\gamma(\stateinst_0)}(\stateinst_0), \inputinst, \psi_{\gamma(\stateinst_0)}(\distset);[t_0,t_1]))  \nonumber \\
		&\hspace{0.5in}\text{[by the definition of $\reachset$ and by Theorem~\ref{thm:symmetric_reachsets}]} 
  \nonumber \\		 
  &= \cup_{\stateinst_0 \in \stateset_0}{\phi_{\gamma(\stateinst_0)}}^{-1}( \reachset( \redstateinst_0,\inputinst, \psi_{\gamma(\stateinst_0)}(\distset);[t_0,t_1]))  \nonumber \\
		 &\hspace{0.5in}\text{[where $\redstateinst_0 \in \symcrosssection$]}  \nonumber \\
		&\subseteq \cup_{\stateinst_0 \in \stateset_0}{\phi_{\gamma(\stateinst_0)}}^{-1}(\reachset(\redstateset_0,\inputinst, \reddistset;[t_0,t_1])),
		 \end{align}
		where the last step follows from the fact that $\reachset(\redstateinst_0,\inputinst, \psi_{\gamma(\stateinst_0)}(\distset)) \subseteq$ $\reachset($ $\redstateset_0,\inputinst, \reddistset;[t_0,t_1])$ since $\redstateinst_0 \in \redstateset_0$ and $\psi_{\gamma(\stateinst_0)}(\distset) \subseteq  \reddistset$.
	\end{proof}

 \section{Specification correspondence between a continuous-time system and a corresponding discrete abstraction}
 \label{sec:specification_correspondence_discrete_continuous}

 Theorem~\ref{thm:fsr_executions_correspondance} below is an adaptation of Lemma 4.23 of \cite{TIOAmon} to our continuous-to-discrete abstraction setting. Theorem~\ref{thm:fsr_executions_correspondance} shows that the existence of a FSR implies that for every trajectory $\xi$ of system~(\ref{eq:sys}), there exists a corresponding execution $\sigma$ of the discrete system. In that case, we say that $\agent$ is an {\em abstraction} of $(\stateset,\inputset, \distset, f, \beta, \tau)$,
	with control signals constrained to be in 
	$\constantinputsignals$, the
	set  of left-piecewise-constant functions, which only switch at time periods of $\timestep$ time units, mapping $\nnreals$ to the set $\inputmap(\disinputset)$. 
	\begin{theorem}[Executions correspondence]
	\label{thm:fsr_executions_correspondance}
	For any
	$\stateinst \in \stateset, \inputinst \in \constantinputsignals,$ measurable $\distinst: \nnreals \rightarrow \distset$, and time horizon $T \geq 0$:
	there exists an execution $\sigma = \langle (\{ \discretestateinst_0\}, \discretestateinst_0, \disinputinst_0), \dots,$ $(\{ \discretestateinst_{l,j}\}_{j\in z_{l}}, \discretestateinst_{l}, \bot) \rangle$ of $\agent$, where 
	$ k = \lceil \frac{T}{\timestep}\rceil$,
	such that for all indices $i \in [l]$,
	$\beta(\disinputinst_i) = \inputinst(i\tau)$,  $(\xi(\stateinst, \inputinst, \distinst; i\timestep), s_i) \in \fsr$, and
	 for all indices $i \in [l]$,   $\forall t \in [i\tau, \min\{(i+1)\tau, T\}]$, $\exists j \in [z_{i+1}]$ such that $(\xi(\stateinst, \inputinst, \distinst; t), s_{i+1,j}) \in \fsr$.
	 \end{theorem}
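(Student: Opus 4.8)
The plan is to construct the execution $\sigma$ incrementally, one discrete transition per time interval $[i\timestep, (i+1)\timestep]$, and to prove its existence by induction on the number of steps $l = \lceil T/\timestep\rceil$. The two facts I would lean on are the two clauses of Definition~\ref{def:fsr}: clause (1) supplies a discrete state related to any continuous state, and clause (2) supplies, from any related pair, a single discrete transition whose visited states cover the continuous trajectory over one (possibly partial) step and whose reached state is related to the continuous endpoint when the step is full. Because the dynamics $f$ in~(\ref{eq:sys}) do not depend explicitly on time, trajectories enjoy the semigroup (flow) property: for all $s,t \ge 0$, $\xi(\stateinst,\inputinst,\distinst; s+t) = \xi(\xi(\stateinst,\inputinst,\distinst; s), \inputinst^{[s]}, \distinst^{[s]}; t)$, where $\cdot^{[s]}$ denotes the time-shift by $s$; this follows from existence and uniqueness of solutions and is what lets me relate the local fragment produced at each step to the global trajectory $\xi(\stateinst,\inputinst,\distinst;\cdot)$.

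For the base case, clause (1) gives $\discretestateinst_0 \in \discretestateset$ with $(\stateinst, \discretestateinst_0) \in \fsr$, which I take as the first element of $\sigma$. For the inductive step, suppose the fragment has been built through index $i$ with $(\xi(\stateinst,\inputinst,\distinst; i\timestep), \discretestateinst_i) \in \fsr$. Since $\inputinst \in \constantinputsignals$ is left-piecewise-constant and switches only at multiples of $\timestep$, and $\beta$ is injective, there is a unique symbol $\disinputinst_i \in \disinputset$ with $\beta(\disinputinst_i) = \inputinst(i\timestep)$, so that $\inputinst$ equals $\beta(\disinputinst_i)$ throughout $[i\timestep,(i+1)\timestep)$. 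I then apply clause (2) at the related pair $(\xi(\stateinst,\inputinst,\distinst; i\timestep), \discretestateinst_i)$ with this $\disinputinst_i$, with the shifted disturbance $\distinst^{[i\timestep]}$, and with duration $t_e = \min\{(i+1)\timestep, T\} - i\timestep$. This yields a one-step fragment $\langle (\{\discretestateinst_i\}, \discretestateinst_i, \disinputinst_i), (\{\discretestateinst_{i+1,j}\}_{j \in [z_{i+1}]}, \discretestateinst_{i+1}, \bot)\rangle$ whose visited states cover the shifted trajectory on $[0,t_e]$ and, when $t_e = \timestep$, whose reached state $\discretestateinst_{i+1}$ is related to the shifted endpoint. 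Rewriting the shifted trajectory via the semigroup property as the restriction of the global one to $[i\timestep, \min\{(i+1)\timestep, T\}]$ gives exactly the two relation statements of the theorem at index $i$, and in the full-step case the endpoint relation $(\xi(\stateinst,\inputinst,\distinst; (i+1)\timestep), \discretestateinst_{i+1}) \in \fsr$ that continues the induction. Concatenating the fragments (identifying the trailing state of each with the leading state of the next, assigning each non-final triple its action $\disinputinst_i$, and keeping $\bot$ as the final action) produces the execution $\sigma$ of length $l$.

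The main obstacle, and the place demanding the most care, is the final, possibly partial, step. When $T$ is not a multiple of $\timestep$ the last interval has $t_e = T - (l-1)\timestep < \timestep$, and clause (2) then guarantees only the visited-state covering on $[0,t_e]$, with no claim on the reached state. This is consistent with the theorem, which asks for the reached-state relation $(\xi(\stateinst,\inputinst,\distinst; i\timestep), s_i) \in \fsr$ only for $i \in [l]$, i.e. at boundaries $i\timestep \le (l-1)\timestep$, and requires merely the covering property on the last interval; I would therefore stop the induction at index $l-1$ without demanding an endpoint relation. A secondary bookkeeping point is the consistent handling of the shift maps $\cdot^{[i\timestep]}$ and of the half-open versus closed intervals at the switching instants, which I would settle by noting that the left-piecewise-constant structure of $\constantinputsignals$ makes $\inputinst$ agree with $\beta(\disinputinst_i)$ on all of $[i\timestep,(i+1)\timestep)$ and by invoking uniqueness of trajectories to align consecutive fragment endpoints.
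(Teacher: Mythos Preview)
Your proposal is correct and is precisely the standard inductive construction that the paper has in mind: the paper does not actually write out a proof of this theorem, but instead states that ``the proof of the theorem is similar to that of Lemma~4.23 of~\cite{TIOAmon}, and [we] skip it here for conciseness.'' The argument in~\cite{TIOAmon} is exactly this step-by-step construction of the abstract execution from the FSR clauses, so your approach matches the intended one.
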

  
	 We say that $\sigma$ in Theorem~\ref{thm:fsr_executions_correspondance} {\em represents} $\xi$. Also, we abuse notation and say that $\sigma \in \fsr(\xi)$ and $\xi \in \fsr^{-1}(\sigma)$. The proof of the theorem is similar to that of Lemma 4.23 of \cite{TIOAmon}, and skip it here for conciseness.

In the following corollary of  Theorem~\ref{thm:fsr_executions_correspondance}, we show that if a trajectory $\xi$ of system~(\ref{eq:sys}) violates the reach-avoid specification $(\stateset_r, \stateset_a)$, then any corresponding execution $\sigma$ of $\agent$, as defined in Theorem~\ref{thm:fsr_executions_correspondance}, violates the reach-avoid specification $(\discretestateset_r, \discretestateset_a)$ as well. Moreover, if an execution $\sigma$ satisfies the reach-avoid specification $(\discretestateset_r, \discretestateset_a)$, then any trajectory $\xi$ of system~(\ref{eq:sys}) that is related to $\sigma$ in $\fsr$ satisfies the  $(\stateset_r, \stateset_a)$ specification. 

\begin{corollary}[Specification correspondence]
\label{cor:specification_correspondence}
For any $\stateinst \in \stateset, \inputinst \in \constantinputsignals,$ measurable $\distinst: \nnreals \rightarrow \distset$, and time horizon $T \geq 0$:
if $\exists t \in [0, T]$ such that $\xi(\stateinst, \inputinst, \distinst; t) \in \stateset_a$, 
 let $i = \lfloor \frac{t}{\tau}\rfloor$, then
$\exists j \in [z_i]$, such that $\discretestateinst_{i,j} \in \discretestateset_a$. Moreover, if $\nexists t\in [0,T]$ such that $\xi(\stateinst, \inputinst, \distinst; t) \in \stateset_r$, then for all $\sigma \in \fsr(\xi)$, $\nexists  i \in [\lceil\frac{T}{\timestep}\rceil+1]$ and $j \in [z_i]$ such that $s_{i,j} \in \discretestateset_r$. 
On the other hand, fix any execution $\sigma$ of $\agent$ of length $l \geq 1$. Then, 
if $\nexists i \in [l+1]$ and $j \in [z_i]$, such that $\discretestateinst_{i,j} \in \discretestateset_a$, then for all $\xi \in \fsr^{-1}(\sigma)$,
$\nexists t \in [0, k\timestep]$ such that $\xi(\stateinst, \inputinst, \distinst; t) \in \stateset_a$. Moreover, if
$s_{l} \in \discretestateset_r$, then 
for any $\xi \in \fsr^{-1}(\sigma)$ of duration $l \timestep$,
$\xi(\stateinst, \inputinst, \distinst; l\timestep) \in \stateset_r$.
	\end{corollary}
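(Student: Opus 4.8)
The plan is to derive all four claims from two ingredients: the execution–trajectory correspondence of Theorem~\ref{thm:fsr_executions_correspondance}, which links a trajectory $\xi$ with a representing execution $\sigma$ through $\fsr$, and the two abstraction conditions imposed on the pair $(\discretestateset_r,\discretestateset_a)$. I would first record these conditions in the forms I actually use: condition (2) says every $\stateinst\in\stateset_a$ is related only to symbols of $\discretestateset_a$, i.e. $(\stateinst,\discretestateinst)\in\fsr$ and $\stateinst\in\stateset_a$ force $\discretestateinst\in\discretestateset_a$; condition (1), in contrapositive form, says $(\stateinst,\discretestateinst)\in\fsr$ together with $\stateinst\notin\stateset_r$ force $\discretestateinst\notin\discretestateset_r$. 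Together the four parts give both directions of the specification correspondence (trajectory violation lifting to execution violation, and execution satisfaction descending to the trajectory).

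For the first (avoid) claim I would fix $t\in[0,T]$ with $\xi(\stateinst,\inputinst,\distinst;t)\in\stateset_a$ and set $i=\lfloor t/\timestep\rfloor$, so that $t$ lies in the time window $[i\timestep,\min\{(i+1)\timestep,T\}]$ of the corresponding step of $\sigma$. Theorem~\ref{thm:fsr_executions_correspondance} then supplies an index $j$ with $(\xi(\stateinst,\inputinst,\distinst;t),\discretestateinst_{i,j})\in\fsr$, and condition (2) immediately yields $\discretestateinst_{i,j}\in\discretestateset_a$. For the second (reach) claim I would argue by contradiction: suppose some $\sigma\in\fsr(\xi)$ has a visited symbol $\discretestateinst_{i,j}\in\discretestateset_r$; since every visited symbol of a representing execution is one of the $\fsr$-images used to cover the trajectory over its time window, there is a time $t$ with $(\xi(\stateinst,\inputinst,\distinst;t),\discretestateinst_{i,j})\in\fsr$, and then condition (1) forces $\xi(\stateinst,\inputinst,\distinst;t)\in\stateset_r$, contradicting the hypothesis that $\xi$ never meets $\stateset_r$.

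For the two backward claims I would fix an execution $\sigma$ and any $\xi\in\fsr^{-1}(\sigma)$ and read the same correspondence in the opposite order. For avoid safety, take any $t\in[0,l\timestep]$; the covering direction of Theorem~\ref{thm:fsr_executions_correspondance} gives a visited symbol $\discretestateinst_{i,j}$ with $(\xi(\stateinst,\inputinst,\distinst;t),\discretestateinst_{i,j})\in\fsr$. Were $\xi(\stateinst,\inputinst,\distinst;t)\in\stateset_a$, condition (2) would place $\discretestateinst_{i,j}$ in $\discretestateset_a$, contradicting the assumption that $\sigma$ visits no avoid symbol; hence $\xi(\stateinst,\inputinst,\distinst;t)\notin\stateset_a$. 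For the reach claim I use that the theorem relates the final reached symbol directly, $(\xi(\stateinst,\inputinst,\distinst;l\timestep),\discretestateinst_l)\in\fsr$; since $\discretestateinst_l\in\discretestateset_r$, condition (1) gives $\xi(\stateinst,\inputinst,\distinst;l\timestep)\in\stateset_r$ at once.

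The delicate point — the one I would spell out most carefully — is the quantifier orientation in the reach (second) claim. Theorem~\ref{thm:fsr_executions_correspondance} is stated in the covering direction ``for every time there is a related symbol'' ($\forall t\,\exists j$), which is exactly what the two avoid arguments need, whereas the reach argument needs the converse ``every visited symbol is related to some time'' ($\forall j\,\exists t$). I would justify the converse by appealing to the construction of representing executions in \defref{fsr}: at each step the visited set $\{\discretestateinst_{i,j}\}_{j\in[z_i]}$ consists precisely of the $\fsr$-images chosen to cover $\{\xi(\stateinst,\inputinst,\distinst;t)\}$ over the step's window, so every visited symbol is by construction related to at least one trajectory point. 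The remaining work — matching the step index $i=\lfloor t/\timestep\rfloor$ to the triple of $\sigma$ carrying $\discretestateinst_{i,j}$, and checking the index ranges $[z_i]$ and $[l+1]$ (including that the final reached symbol $\discretestateinst_l$ is related at time $l\timestep$) — is routine bookkeeping once this correspondence is in hand.
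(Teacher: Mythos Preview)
Your proposal is correct and follows essentially the same approach as the paper: the paper's own proof is the one-sentence remark that the corollary ``follows from Theorem~\ref{thm:fsr_executions_correspondance} together with the definition of $\discretestateset_r$ which under-approximates $\stateset_r$ and the definition of $\discretestateset_a$ which over-approximates $\stateset_a$,'' which is precisely the pair of ingredients you spell out. Your write-up is in fact more careful than the paper's, and the quantifier-orientation issue you flag in the second (reach) claim is a genuine subtlety that the paper does not address; your resolution via the construction of representing executions in Definition~\ref{def:fsr} is the right place to look, though note that the definition as stated only guarantees the $\forall t\,\exists j$ direction, so strictly speaking one must read ``$\sigma$ represents $\xi$'' as choosing the visited set to be exactly the $\fsr$-cover of the trajectory over each window rather than an arbitrary superset from $\delta$.
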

	
	In addition to Theorem~\ref{thm:fsr_executions_correspondance}, the proof of Corollary~\ref{cor:specification_correspondence} follows from
	the definition of $\discretestateset_r$ which under-approximates $\stateset_r$ and the definition of $\discretestateset_a$ which over-approximates $\stateset_a$.


 The following corollary of Corollary~\ref{cor:specification_correspondence} shows that if the controller $\controller$ of $\agent$ satisfies a reach-avoid specification that abstracts the reach-avoid specification of system~(\ref{eq:sys}), then the controller it induces satisfies that specification of system~(\ref{eq:sys}).
	
	\begin{corollary}
	If there exists a discrete abstraction $\agent$ 
	of $(\stateset,\inputset, \distset, f, \beta, \tau)$, an abstraction $(\discretestateset_r, \discretestateset_a)$ of $(\stateset_r, \stateset_a)$, 
	and a controller $\controller$ that satisfies the discrete reach-avoid specification $(\discretestateset_r, \discretestateset_a)$ for $\agent$ starting from a set $R \subseteq \discretestateset$, then the zero-order-hold controller induced by $\controller$ satisfies the reach-avoid specification $(\stateset_r, \stateset_a)$ for system~(\ref{eq:sys}) starting from the set $\fsr^{-1}(R) \subseteq \stateset$.
	\end{corollary}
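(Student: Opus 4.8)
The plan is to lift a specification-satisfying \emph{closed-loop} discrete execution to the continuous trajectory produced by the induced zero-order-hold controller, using the execution- and specification-correspondence results (Theorem~\ref{thm:fsr_executions_correspondance} and Corollary~\ref{cor:specification_correspondence}). First I would fix an initial state $\stateinst_0 \in \fsr^{-1}(R)$, so that there is some $\discretestateinst_0 \in R$ with $(\stateinst_0, \discretestateinst_0) \in \fsr$, and fix an arbitrary measurable disturbance $\distinst : \nnreals \rightarrow \distset$. Let $g$ be the zero-order-hold controller induced by $\controller$, and let $\xi = \xi(\stateinst_0, g(\stateinst_0, \cdot), \distinst; \cdot)$ be the resulting trajectory. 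The first routine step is to verify that the control signal $g(\stateinst_0, \cdot)$ lies in $\constantinputsignals$: by construction $g$ holds the value $\beta(\controller[\discretestateinst])$ constant over each length-$\timestep$ sampling interval and takes values in $\inputmap(\disinputset)$, so it is left-piecewise-constant with switches only at multiples of $\timestep$, exactly as required to invoke Theorem~\ref{thm:fsr_executions_correspondance}.

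Next I would build, step by step, a closed-loop execution $\sigma$ of $(\agent, \controller)$ that represents $\xi$. At the $i$-th sample, the controller observes $\stateinst_i := \xi(i\timestep)$, selects the discrete state $\discretestateinst_i$ with $(\stateinst_i, \discretestateinst_i) \in \fsr$, and applies $\beta(\controller[\discretestateinst_i])$; condition~(2) of Definition~\ref{def:fsr} then supplies a one-step transition whose visited states track $\xi$ on $[i\timestep, (i+1)\timestep]$ and whose reached state $\discretestateinst_{i+1}$ satisfies $(\stateinst_{i+1}, \discretestateinst_{i+1}) \in \fsr$. Because $\beta$ is injective and $g$ outputs $\beta(\controller[\discretestateinst_i])$, the action recorded at this step is exactly $\controller[\discretestateinst_i]$, so $\sigma$ is genuinely a run of the \emph{closed} system $(\agent, \controller)$. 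The inductive invariant I would carry is that each $\discretestateinst_i$ lies in $R$: this is what keeps $\controller[\discretestateinst_i] \neq \bot$ and the zero-order-hold controller well defined to continue, and it holds because a prefix of a specification-satisfying closed-loop execution cannot terminate at a $\bot$ action before reaching $\discretestateset_r$, while $\discretestateset_r \subseteq R$ by the initialization of $R$ in Algorithm~\ref{code:discrete_synthesis}.

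Since $\discretestateinst_0 \in R$ and $\controller$ satisfies the discrete reach-avoid specification $(\discretestateset_r, \discretestateset_a)$, the execution $\sigma$ just constructed must satisfy it as well: it has finite length $l$ with terminal state $\discretestateinst_l \in \discretestateset_r$ and all visited states outside $\discretestateset_a$. I would then close the argument by applying Corollary~\ref{cor:specification_correspondence} to $\sigma$ and $\xi \in \fsr^{-1}(\sigma)$: the ``avoid'' half yields $\xi(\stateinst_0, g(\stateinst_0,\cdot), \distinst; t) \notin \stateset_a$ for all $t \in [0, l\timestep]$, and the ``reach'' half yields $\xi(\stateinst_0, g(\stateinst_0,\cdot), \distinst; l\timestep) \in \stateset_r$. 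Setting $t_r = l\timestep$ shows that $\stateinst_0$ together with $g$ meets the requirements of Problem~\ref{def:problem}; since $\stateinst_0 \in \fsr^{-1}(R)$ and $\distinst$ were arbitrary, $g$ solves the specification $(\stateset_r, \stateset_a)$ from the whole set $\fsr^{-1}(R)$.

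I expect the main obstacle to be the simultaneity demanded of $\sigma$ in the middle step: it must be at once a \emph{representative} of $\xi$, so that Corollary~\ref{cor:specification_correspondence} applies, and a \emph{closed-loop} run of $(\agent, \controller)$, so that the controller-satisfaction hypothesis applies. Maintaining the invariant $\discretestateinst_i \in R$ through the induction, and thereby ruling out premature termination of the execution and ill-definedness of $g$, is the delicate part. A secondary technical point worth stating explicitly is that when $\fsr$ is not single-valued, the sampled state $\discretestateinst_i$ must be chosen by a fixed selection consistent with the one used to define $g$; for the grid-based relation $\fsr_\gridunder$ this is automatic, since it is a function.
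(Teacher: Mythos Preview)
Your proposal is correct and in fact considerably more detailed than what the paper offers: the paper does not give an explicit proof of this corollary but simply presents it as an immediate consequence of Corollary~\ref{cor:specification_correspondence}. Your plan---fix $\stateinst_0 \in \fsr^{-1}(R)$ and a disturbance, build inductively a closed-loop execution $\sigma$ of $(\agent,\controller)$ that represents the zero-order-hold trajectory $\xi$, then invoke Corollary~\ref{cor:specification_correspondence}---is exactly the argument the paper leaves implicit, and your identification of the key subtlety (that $\sigma$ must simultaneously represent $\xi$ and be a run of the \emph{closed} system) is apt.

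One small refinement: your justification that the invariant $\discretestateinst_i \in R$ holds appeals to the initialization of $R$ in Algorithm~\ref{code:discrete_synthesis}, but the corollary as stated does not assume $R$ was produced by that algorithm. The cleaner route is to note that in Section~\ref{sec:discrete_synthesis} the paper \emph{defines} $R := \{\discretestateinst \in \discretestateset \mid \controller[\discretestateinst] \neq \bot\}$, i.e., $R$ is by definition the domain of $\controller$. Hence ``$\controller$ satisfies the specification from $R$'' already entails that every closed-loop execution from $R$ is well-defined until it terminates in $\discretestateset_r$; the intermediate states must lie in $R$ simply because $\controller$ is defined on them. This makes your inductive invariant a consequence of the hypothesis rather than of a particular synthesis algorithm.
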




 \section{Details of the ship-docking case study}
 \label{sec:appendix_numerical_details}
 The kinematics of a marine vessel can be approximated using the ODE~\cite{Meyer2020_control_synthesis_ship_docking}: $\dot{\eta} = R(\theta) \nu + v_c$, 
		where $\eta = [N;E;\theta] \in \reals^3$ is the state of the ship representing its South-North and West-East position and its heading angle, respectively. The control input $\nu$ is a 3-dimensional vector that determines the surge, sway, and yaw velocities of the ship. $R(\theta)$ is a rotation matrix that given the heading angle $\theta$, rotates $\nu$ from body coordinates to the global coordinates where $\theta = 0$ points to the North. Finally, $v_c$ represents the disturbance resulting from water-current velocities.
		We consider the same environment and reach-avoid specification of Meyer et al.~\cite{Meyer2020_control_synthesis_ship_docking}, where the ship is required to dock while avoiding obstacles in a rectangular pier. We make a slight modification to represent boundaries as obstacles. 
	The original model of the ship is 6-dimensional. However, as shown in \cite{Meyer2020_control_synthesis_ship_docking}, three dimensions can be abstracted using another form of abstraction,
called {\em continuous abstraction}. 

 We implemented our algorithms 
  in Python. We used TIRA~\cite{TIRA_Meyer_2019} (implemented in Matlab) to compute the reachable sets in Algorithm~\ref{code:symmetry_reachability_computations}. The reachable sets are represented as lists of time-annotated axis-aligned hyper-rectangles in $\reals^3$. Then, we transformed their hyper-rectangles to polytopes before translating and rotating them to obtain the reachable sets starting from the different cells in $Q_\state$ to get the relevant transitions in $\delta_\gridunder$ during synthesis. We used the Polytope library\footnote{\url{https://pypi.org/project/polytope/}} to perform such transformations. 
  We over-approximate the union in Corollary~\ref{cor:lower_dimensional_reachable_sets} by transforming the reachable set in $\redreachset$ using the frames corresponding to the vertices of the hyper-rectangle representing $\stateset_0$, and then taking the bounding box of the resulting reachable sets.

     We follow the experimental setup of \cite{Meyer2020_control_synthesis_ship_docking}: we consider the rectangular state set $\stateset = [0,10] \times [0,6.5] \times [-\pi,\pi]$ and input set $\inputset = [-0.18,0.18] \times [-0.05,0.05] \times [-0.1,0.1]$. We use a disturbance set $\distset = [\frac{-0.01}{\sqrt{2}},\frac{0.01}{\sqrt{2}}]^3$ smaller than the one in \cite{Meyer2020_control_synthesis_ship_docking}, which was $[-0.01,0.01]^3$, since after the rotations in Corollary~\ref{cor:lower_dimensional_reachable_sets}, our $\distset$ will become the set $\reddistset = [-0.01,0.01]^3$. For the reach-avoid specification, as \cite{Meyer2020_control_synthesis_ship_docking}, we consider the reach set $\stateset_r = [7,10] \times [0, 6.5] \times [\pi/3, 2\pi/3]$ and the avoid sets $\stateset_a = \stateset_{a,0} \cup \stateset_{a,1}$, where $\stateset_{a,0}  = [2, 2.5] \times [0,3] \times [-\pi,\pi]$ and $\stateset_{a,1} = [5,5.5] \times [3.5, 6.5] \times [-\pi, \pi]$. 
		However, in contrast with \cite{Meyer2020_control_synthesis_ship_docking}, we  add more avoid sets to represent the boundaries of $\stateset$. The reason behind this change is to keep track of the environment boundaries in the symmetry-based abstraction.  
We do that as follows: first, we expand the boundaries of $\stateset$ by 3 units in each of the first two dimensions, resulting in $\stateset = [-3,13] \times [-3,9.5] \times [-\pi,\pi]$. Then, we expand the two avoid sets and add four new obstacles at each side of $\stateset$ resulting in $\stateset_a = \cup_{i \in [6]} \stateset_{a,i}$, where $\stateset_{a,0} = [2, 2.5] \times [-3,3] \times [-\pi,\pi]$, $\stateset_{a,1} = [5,5.5] \times [3.5, 9.5] \times [-\pi, \pi]$, $\stateset_{a,1} = [5,5.5] \times [3.5, 9.5] \times [-\pi, \pi]$, $\stateset_{a,2} = [-3,0] \times [-3, 9.5] \times [-\pi, \pi]$, $\stateset_{a,3} = [-3,13] \times [-3, 0] \times [-\pi, \pi]$, $\stateset_{a,4} = [-3,13] \times [-3, 9.5] \times [-\pi, \pi]$, and $\stateset_{a,5} = [10,13] \times [6.5, 9.5] \times [-\pi, \pi]$. Since our environment is larger, having the same number of partitions per dimensions in $Q_x$ as Meyer et al.~\cite{Meyer2020_control_synthesis_ship_docking} results in larger cells in our case. Hence, we consider a finer gridding ($80 \times 100 \times 50$ in the fourth resolution in our experiments) than the one in \cite{Meyer2020_control_synthesis_ship_docking}  ($50 \times 50 \times 50$), which roughly amounts to the same cell size.
We choose $\tau$ to be 3 seconds. We modified TIRA~\cite{TIRA_Meyer_2019} to generate the full reachable set instead of just the one at $T = \tau = 3$ seconds. Instead, it now results in a sequence of axis-aligned rectangles representing the reachable sets in the different sub-intervals of $[0,\tau]$. In our case study, we choose to partition $[0,\tau]$ to four intervals, resulting in reachable sets with four rectangles. 
 Example reachable sets starting from the origin using the centers of the grid $Q_\inputinst$ over $\inputset$ as constant control signals 
is shown in \ref{fig:redreachableset_9}.


In strategies 1, 2, 4, and 5, we start with batches of size 3 and multiply that by 5 at the end of the for-loop in line~\ref{ln:ssynthesis_quantized_for_over_batches} until the limit is reached (400 or all $9^3$, depends on the strategy). We consider the greedy approach to update $A$ except in the last one, at which we follow the random approach to sample 75 unique symbols to explore.

We ran all experiments on an 18 core Intel i9 7980xe with an AMD64 instruction set and stock speeds with ASUS multicore enhancement defaults, 80 GB of 2666 MHz ddr4 RAM with a 16 GB ZRAM partition, and Ubuntu 22.04 operating system.



	\begin{figure}
  \begin{center}
    \includegraphics[width=\textwidth]{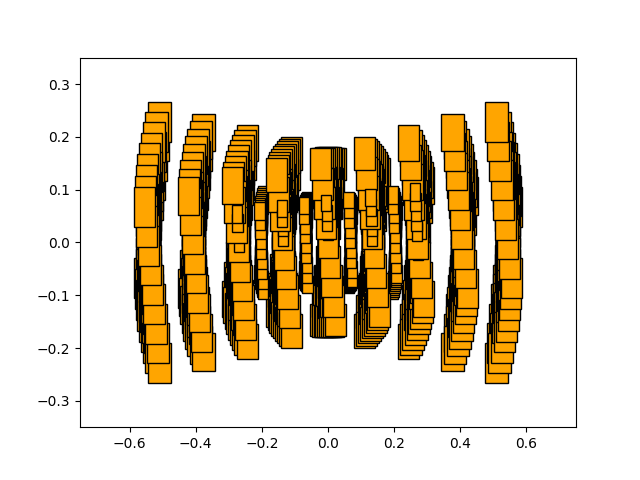}
  \end{center}
  \caption{$\redreachset$: set of reachable sets, projected into the position coordinates, for the ship example having the origin as the initial state following $9^3$ different constant control signals (centers of the cells in $Q_\inputinst$ that partitions $\inputset$ to 9 intervals in each of the three dimensions) for 3 seconds divided into 4 time steps. \protect\label{fig:redreachableset_9}} 
\end{figure}

\end{document}